%% file: main.tex
\documentclass[letterpaper,11pt,dvipsnames]{article}

\usepackage{macros}

\newcommand{\MDP}{\mathsf{MDP}}
\newcommand{\SVP}{\mathsf{SVP}}
\newcommand{\GapETH}{\mathsf{Gap\text{-}ETH}}
\newcommand{\FPT}{\mathsf{FPT}}
\newcommand{\Wone}{\mathsf{W}[1]}

\title{Tight Lower Bound for Approximating Parametrized Maximum Likelihood Decoding under ETH}
\author{Rishav Gupta \\ National University of Singapore \\ \texttt{rishavg@u.nus.edu} \and Bingkai Lin\footnote{State Key Laboratory of Novel Software Technology, Nanjing University, 
Nanjing 210023, China} \\ Nanjing University \\\texttt{lin@nju.edu.cn} \and Xin Zheng \\ Nanjing University\\ \texttt{xinzheng@smail.nju.edu.cn}}
\date{}

\begin{document}

\maketitle
\begin{abstract}
We present a simple deterministic reduction which, assuming the Exponential Time Hypothesis ($\ETH$),  yields tight 
lower bounds for approximating the parameterized Maximum Likelihood Decoding problem ($\mld$) and the parameterized Nearest Codeword Problem ($\ncp$) within some fixed constant factor.
Our starting point is the
$\ETH$-based exponential-time hardness of $\gapmaxlin$ established in \cite{BHIRW24}. We  transform a $\gapmaxlin$
instance into an instance of $\gapkmld$ via a novel
combinatorial object that we call a \emph{cover family}. We provide both a randomized construction
of the required cover families and a subsequent derandomization. Prior to our work, 
$n^{\Omega(k)}$ hardness for 
constant-factor approximation was only shown under the randomized
Gap Exponential Time Hypothesis $\GapETH$ \cite{Man20}, which is a much stronger assumption than $\ETH$. Under $\ETH$,
the strongest known lower bound was $n^{\Omega(k/\operatorname{poly} \log k)}$ due to
\cite{BafnaSM25}.
Unlike previous approaches that rely on reductions from the hardness of approximating $\csp[2]$, our reduction provides a more direct and conceptually simpler route to achieving the optimal lower bounds.
\end{abstract}
\newpage
\input{intro}

\input{prelims_matrix}

\input{maxlin_to_gap_mdp_matrix}
\input{construction}

\section{Acknowledgments}
Supported by the ``111 Cente'' (No. B26023) and NRF grant NRF-NRFI09-0005.
The authors would like to thank the anonymous reviewers for their valuable feedback. Special thanks to Reviewer B for simplifying the proof for \cref{lem:partition_gadget_derandomize_lem}.


\bibliographystyle{alpha}
\bibliography{references}


\end{document}

%% file: intro.tex
\section{Introduction}
The study of error-correcting codes gives rise to various computational problems. In the context of channel coding, one of the most fundamental tasks is to recover the original message from a signal corrupted by noise. In this problem, which is known as the \emph{Maximum Likelihood Decoding} ($\mld$) problem, we are given the parity check matrix $\vec H \in \F_q^{d \times n}$ of a linear code, a vector $\vec{t} \in \F_q^d$, and a parameter $k\in\N$. Our goal is to find a vector $\vec{x} \in \F_q^n$ of Hamming weight at most $k$ such that $\vec H\vec{x} = \vec{t}$. This problem is computationally equivalent to the \emph{Nearest Codeword Problem} ($\ncp$), where we are given the generator matrix $\vec A \in \F_q^{d \times n}$ and the goal is to find a vector $\vec{x} \in \F_q^n$ such that $\|\vec A\vec x-\vec t\|_0\le k$.

The computational hardness of these problems is well-established. They were proven to be NP-hard decades ago~\cite{BMvT78} and are known to be hard to approximate within any constant factor in polynomial time~\cite{ABSS97, DMS03}. Consequently, attention shifted to the framework of \emph{parameterized complexity}~\cite{DF99}, where the runtime is analyzed with respect to the parameter $k$.

In the parameterized setting, the hardness of the {exact} problem was settled at first. Downey and Fellows~\cite{downey1999parametrized} proved that $\kmld$ parameterized by $k$ is $\Wone$-hard, ruling out exact FPT algorithms, i.e., algorithms that run in $f(k) \cdot n^{O(1)}$ time, under $\Wone\neq\FPT$. However, the question of whether \emph{approximation} could yield fixed parameter tractability remained a major open problem for over two decades.

This question was finally resolved in a series of breakthrough results. The authors of \cite{BBE+21} established that $\kncp$ is $\Wone$-hard to approximate within any constant factor.
They also proved the parameterized inapproximability of the \emph{Minimum Distance Problem} ($\MDP$) over $\F_2$ and the \emph{Shortest Vector Problem} ($\SVP$) in $\ell_p$ norm for every $p>1$.
Following this, Bennett, Cheraghchi, Guruswami, and Ribeiro~\cite{BCGR23}  generalized these results, proving the inapproximability of $\kmdp$ over all finite fields and $\ksvp$ in general $\ell_p$ norms.
These works painted a comprehensive picture of the FPT inapproximability of coding problems. With the W[1]-hardness established, the focus has shifted to the fine-grained complexity: 
\begin{center}
    \textit{What is the precise running time required to approximate $\kmld$ when $k$ is small?}
\end{center}

Under the standard gap-free hypothesis, the \emph{Exponential Time Hypothesis} ($\ETH$), \cite{BBE+21} ruled out $n^{O((\log k)^{1/{2.01}})}$-time algorithms for constant gap $\kncp$. Their reduction incurred an exponential blow-up in the parameter, which results in the relatively weak lower bound.
Very recently, Li, Lin, and Liu~\cite{LLL24} developed a new gap-creating reduction with polynomial parameter growth, hence improving the lower bound under standard $\ETH$, ruling out $n^{o(k^{1/3})}$-time algorithms that solve $\gapkncp$ for any $1<\gamma<\frac{3}{2}$.
Through a self-reduction, they further ruled out $n^{o(k^\varepsilon)}$-time algorithms where $\varepsilon=\frac{1}{\text{poly}\log\gamma}$ for any constant gap $\gamma>1$.
While this was a major step forward, a substantial gap remains between the lower bound and the trivial $n^{O(k)}$ upper bound.

To date, the only tight lower bound comes from the work of Manurangsi~\cite{Man20}, which showed that no  approximation algorithm exists for these problems in time $n^{o(k)}$ for any constant factor. However, this result relies on the randomized \emph{Gap-Exponential Time Hypothesis} ($\GapETH$), which is a significantly stronger assumption than $\ETH$.

The above-mentioned results establish hardness for \emph{all} constant approximation factors. However, even obtaining an optimal lower bound under $\ETH$ for \emph{some} constant approximation factor remained open.
Prior to our work, the strongest known $\ETH$-based hardness for \emph{some} approximation factor was established via a chain of reductions: first, by invoking the hardness of approximating $k$-variable $\csp[2]$~\cite{optethpih,BafnaSM25}, then reducing $k$-variable $\csp[2]$ to $\kexactcover$~\cite{grs24}, and finally reducing $\kexactcover$ to $\gapkmldq$  \cite{ABSS97}.
This approach yields suboptimal running-time lower bounds: namely $n^{\,k/2^{\Omega(\sqrt{\log k}\cdot \log\log k)}}$ via~\cite{optethpih} and $n^{\,k/\log^{C}(k)}$ via~\cite{BafnaSM25}.

\paragraph{Our Contribution.} In this work, we prove that the tight $n^{\Omega(k)}$ lower bound holds from some constant approximation factor under the standard non-deterministic Exponential Time Hypothesis, removing the need for the stronger Gap-ETH assumption.

\begin{theorem}[Main Result]
\label{thm: mld_main}
    Assuming $\ETH$, there exist constants $\gamma > 1$ and $\delta>0$, such that for any algorithm that takes as input a matrix $\vec{H}\in \F_q^{d \times N}$, a vector $\vec{u}\in\F_q^{d}$ and a parameter $k\in\N$, it must take $ N^{\delta k}$ time to distinguish the following two cases:
    \begin{itemize}
        \item $\yes$ Case: There exists $\vec x\in\F_q^{N}$ with $\|\vec x\|_0\le k$ such that $\vec H\vec x=\vec u$.
        \item $\no$ Case: For every $\vec x\in\F_q^{N}$ with $\|\vec x\|_0\le \gamma k$, we have $\vec H\vec x\neq\vec u$.
    \end{itemize}
\end{theorem}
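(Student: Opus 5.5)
The plan is to reduce from the $\ETH$-hardness of $\gapmaxlin$ established in \cite{BHIRW24}. That result gives, for some constants $0<s<c\le 1$, the following: distinguishing whether a system of $m=O(n)$ linear equations over $\F_q$ in $n$ variables has an assignment satisfying a $c$ fraction of equations, or whether every assignment satisfies at most an $s$ fraction, requires $2^{\Omega(n)}$ time. Given the parameter $k$, the first step partitions the $n$ variables into $k$ blocks $B_1,\dots,B_k$ of size about $n/k$. For each block we introduce one column of $\vec H$ for every partial assignment $\sigma\in\F_q^{B_i}$. This gives $N\approx k\,q^{n/k}$ columns, so $N^{\delta k}=2^{O(\delta n)}$. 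A $k^{o(1)}$ or $N^{o(k)}$ algorithm would then contradict the $2^{\Omega(n)}$ bound once $\delta$ is small enough.

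The column for $(i,\sigma)$ carries two kinds of coordinates. The first are ``selection'' coordinates that force exactly one column per block in the $\yes$ case. The second encode the contribution of $\sigma$ to each equation: equation $j$ reads $\sum_i \ell_{j,i}(\sigma_i)=b_j$, where $\ell_{j,i}$ is its restriction to block $i$. A plain one-hot selection gadget does not give a gap, because the solver can pick several columns in one block or exploit cancellations. The standard ABSS-style fix replaces each equation $j$ by a family of checks and gives slack through extra ``error'' columns. Each unsatisfied equation must then be paid for by picking additional unit-like columns, at cost about $1$ per violated equation in a group.

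To keep the parameter linear in $k$, I would aggregate equations using the \emph{cover family} from the abstract. This is a family of subsets, or test groups, of equations indexed so that each of the $k$ blocks sees only $O(1)$ groups. It must also ensure that any assignment violating an $(1-s)$ fraction of equations leaves $\Omega(k)$ groups unsatisfied. Each unsatisfied group then forces $\Omega(1)$ extra nonzero coordinates, which gives weight $\ge \gamma k$ in the $\no$ case. In the $\yes$ case only a $(1-c)$ fraction is violated, and we choose the cover family so that this costs only $\le \epsilon k$ extra columns. Rescaling $k$ gives total weight $k$ in the $\yes$ case and $\gamma k$ in the $\no$ case.

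The soundness direction needs a careful argument. If $\vec x$ has weight $<\gamma k$, then most blocks use exactly one column with coefficient that can be normalized, so $\vec x$ decodes to a global assignment. Linear combinations of several columns in the same block must also be handled. I would handle this by using, in each block, a column structure built from a code with good distance, so that any nontrivial combination of two or more columns in a block violates many selection coordinates and costs $\Omega(1)$ extra weight per such block.

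The main obstacle is constructing the cover family with the needed properties: linear size in $k$, bounded degree, and robust transfer of a constant fraction of violated equations into a constant fraction of violated groups. I would first try a random construction, where each group is a random set of equations of appropriate density, and use a Chernoff bound plus a union bound over the $q^n$ assignments. This union bound is delicate, because $q^n$ is far larger than $N$. So I would instead argue for fixed violation patterns at the level of groups via expansion. I would then derandomize using explicit expanders or small-bias/$t$-wise independent sample spaces, so that the reduction is deterministic and runs in $\mathrm{poly}(N)$ time.
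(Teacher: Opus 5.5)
There is a genuine gap, and it sits exactly where the gap of the output instance has to come from. The Gap-MaxLin instances of \cite{BHIRW24} have imperfect completeness, $0<s<c<1$ (for $c=1$ Gaussian elimination solves the problem). So in the YES case the good assignment may still violate a constant fraction, $(1-c)m\gg k$, of the equations, and all of these must be paid for with only $O(k)$ columns.

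Your mechanism cannot be realized by any grouping fixed in advance. In it, each unsatisfied test group costs $\Omega(1)$, the YES case costs at most $\epsilon k$, and the NO case leaves $\Omega(k)$ groups unsatisfied. But the reduction does not know which equations the YES assignment violates. Nothing prevents that violated set, of size $(1-c)m$, from containing an equation of every one of your $O(k)$ groups, making them all unsatisfied. A per-group satisfied/unsatisfied signal cannot separate violation density $1-c$ from $1-s$. If you instead give each group columns encoding arbitrary error patterns inside it, the NO case becomes exactly as cheap as the YES case.

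The delicate union bound over $q^n$ assignments and the expander derandomization you anticipate are symptoms of aiming at this infeasible object. Your distance-code gadget against several columns per block is also unnecessary. The equations are linear, so any combination of partial assignments within a block decodes to a genuine assignment.

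The missing idea is \emph{capped} aggregation of the error support. The paper first dualizes (Lemma~\ref{lem:maxlin2mld}). With a parity-check matrix $\vec H$ and $\vec u=\vec H\vec b$, the witness is just the error vector, of weight at most $(1-c)m$ versus more than $(1-s)m$. No selection gadgets or assignment decoding are needed.

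The output columns are all $\vec H\vec\alpha$ with $\mathrm{supp}(\vec\alpha)$ in a cover family $\mathcal S\subseteq 2^{[m]}$. Its members have size at most $(1+\varepsilon)(1-c)m/k$, and every set of size at most $(1-c)m$ is a disjoint union of $k$ members. The cap alone yields soundness with $\gamma=\frac{1-s}{(1+\varepsilon)(1-c)}$.

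The real difficulty is completeness with only $2^{O(m/k)}$ columns, since any single fixed partition of $[m]$ into $k$ blocks splits some error supports unevenly. This is resolved by a balanced partition family, together with taking all small subsets of each bucket. The family is either $O(k)$ random partitions, with a union bound only over the fewer than $e^m$ supports rather than over assignments. Or it is, deterministically, the $\lceil\log_k m\rceil$ coordinate projections of a diagonal slice of $[k]^d$, justified by entropy subadditivity (Lemma~\ref{lem:partition_gadget_derandomize_lem}).

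Your primal block-of-variables framework could be repaired by importing this same capped error family. The assignment columns would then merely dilute the gap, which is why the paper dualizes them away.
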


If we allow  $k$ to be some function of $N$, the proof of \cref{thm: mld_main} actually gives an $ N^{\delta k}$-time lower bound for $\gapkmldq$ when $k\le O((\log\log N)^{0.49})$. Assuming randomized $\ETH$, we can obtain the same lower bound for $k\le N^\varepsilon$ for some constant $\varepsilon>0$.

\begin{theorem}
\label{thm: mld_main_2}
    Assuming randomized $\ETH$, there exist constants $\gamma > 1$ and $\delta,\varepsilon>0$, such that for any algorithm that takes as input a matrix $\vec{H}\in \F_q^{d \times N}$, a vector $\vec{u}\in\F_q^{d}$ and an integer $k\in\N$ such that $2\le k\le N^\varepsilon$, it must take $N^{\delta k}$ time to distinguish the following two cases:
    \begin{itemize}
        \item $\yes$ Case: There exists $\vec x\in\F_q^{N}$ with $\|\vec x\|_0\le k$ such that $\vec H\vec x=\vec u$.
        \item $\no$ Case: For every $\vec x\in\F_q^{N}$ with $\|\vec x\|_0\le \gamma k$, we have $\vec H\vec x\neq\vec u$.
    \end{itemize}
\end{theorem}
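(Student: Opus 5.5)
Assuming randomized $\ETH$, I will reduce from $\gapmaxlin$. The input is a system of $m$ linear equations over $\F_q$ in $n$ variables, with $m=O(n)$ and each equation of constant size. By the hardness result of \cite{BHIRW24}, under randomized $\ETH$ it takes $2^{\Omega(n)}$ time to distinguish a fully satisfiable system from one in which every assignment violates at least an $\eta$-fraction of the equations.

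The reduction goes as follows. Split the $n$ variables into $k$ blocks $B_1,\dots,B_k$ of size $n/k$ each. For every block $B_i$ and every partial assignment $\sigma\in\F_q^{B_i}$, create one column of $\vec H$, so that $N=k\cdot q^{n/k}$. A solution of weight exactly $k$ that picks one column per block should correspond to a global assignment, and the target $\vec u$ should enforce two things: exactly one column is picked per block (with coefficient $1$), and every equation is satisfied. The coordinates checking equations cannot be indexed by the equations directly, because an equation touches several blocks and is not a sum of per-block contributions in the weight-$k$ picture. I handle this with a linear encoding. For each equation $\sum_j a_jx_j=b$, its contribution from block $i$ is $\sum_{j\in B_i}a_j\sigma_j\in\F_q$. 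I place these field values in the rows associated with that equation, so that the sum over the chosen columns equals $b$ exactly when the equation holds. This exactness gives completeness, with weight $k$.

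Soundness is where the gap comes from, and this is where I would invoke the cover family. A cheating solution of weight $\gamma k$ may use several columns per block with arbitrary coefficients, which lets it fake satisfaction of many equations. To rule this out, I would first apply a random linear compression: replace the $m$ equation-rows by $d=O(n/k)$-ish random combinations, or better, group the equations using a cover family. The cover family should be a collection of subsets with the following property. Any vector using at most $\gamma k$ columns touches at most $\gamma k$ partial assignments. For a constant fraction of blocks it uses few columns, so its support can be decoded to a global assignment. That assignment violates at least $\eta m$ equations, and the cover family then guarantees that some check row is nonzero. I will construct the cover family randomly, by taking random subsets of equations of appropriate size and union-bounding over all $N^{O(\gamma k)}$ candidate supports. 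This is where randomness enters, hence randomized $\ETH$. The union bound is what restricts $k$ to at most $N^\varepsilon$, since I need $q^{n/k}\ge$ poly in $k$.

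The running-time bound then follows directly. An $N^{\delta k}$-time algorithm becomes $(kq^{n/k})^{\delta k}=2^{O(\delta n)}$ for $k\le N^\varepsilon$, which contradicts $2^{\Omega(n)}$ hardness when $\delta$ is small enough. The main obstacle is the soundness analysis of multi-column supports, specifically showing that a constant $\gamma>1$ forces most blocks to contain a single column, and that these columns yield an assignment whose violations survive the compression. I expect to need an averaging argument: at least a $(1-(\gamma-1))$-fraction of the blocks are singletons. Taking $\gamma-1\ll\eta$, the equations touching multi-column blocks account for only a small fraction of the violations.
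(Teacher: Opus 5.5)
There is a genuine gap, and it is at the start: the hardness you assume cannot hold. Distinguishing a \emph{fully satisfiable} system of linear equations over $\F_q$ from one that is far from satisfiable takes polynomial time by Gaussian elimination. The result of \cite{BHIRW24} that the paper uses (\cref{cor:linEthHard}) has imperfect completeness, $0<s<c<1$. So even in the $\yes$ case the best assignment may violate $(1-c)m=\Theta(n)$ equations.

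Your encoding cannot absorb these errors. Write $y_{i,\sigma}$ for the coefficient of column $(i,\sigma)$. The block rows force $\sum_\sigma y_{i,\sigma}=1$ for each block, and the block contribution $\sigma\mapsto\sum_{j\in B_i}a_j\sigma_j$ is linear. Hence any solution of your system, of any weight, encodes the assignment $\tau|_{B_i}=\sum_\sigma y_{i,\sigma}\sigma$, and $\tau$ satisfies \emph{every} equation. Your instance is therefore solvable if and only if the linear system is exactly satisfiable. There is no ``cheating'' to rule out. And for a $\yes$ instance that is not exactly satisfiable, which is the typical case when $c<1$, completeness fails outright. The multi-column soundness analysis, the random row compression, and the cover family over equations are all aimed at the wrong difficulty.

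The missing idea is how to pay for an error vector of weight $\Theta(m)$ using only $k$ columns. The paper proceeds as follows.
\begin{itemize}
\item It first eliminates the variables by duality. Take a parity-check matrix $\vec H$ of the code generated by the constraint matrix $\vec A$, and set $\vec u=\vec H\vec b$. The question becomes whether some $\vec e$ with $\vec H\vec e=\vec u$ has weight at most $(1-c)m$, or every such $\vec e$ has weight above $(1-s)m$ (\cref{lem:maxlin2mld}).
\item The new columns are all $\vec H\vec\alpha$ with $\supp(\vec\alpha)$ in an $([m],k,\alpha,\varepsilon)$-cover family $\Scal$.
\item Completeness is property (C2): split $\supp(\vec e)$ into $k$ disjoint members of $\Scal$.
\item Soundness is pure counting from (C1): $\gamma' k$ columns have total support at most $\gamma'(1+\varepsilon)\alpha m$. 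No decoding of assignments and no averaging over blocks is needed.
\end{itemize}

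Randomness enters only in building $\Scal$. Sample $O(k/(\varepsilon^2\alpha))$ random functions $[m]\to[k]$, keep the balanced ones, and take all small subsets of their buckets. A Chernoff bound plus a union bound over the $\binom{m}{\alpha m}$ sets $S$ (not over candidate solution supports) shows every such $S$ is evenly split by some kept function. The requirement $m\ge 6k\ln 2k/(\varepsilon^2\alpha)$ is what limits $k$ to a small polynomial in $N$. Your final running-time calculation has the right shape once $N=|\Scal|\cdot q^{(1+\varepsilon)\alpha m/k}$.
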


\paragraph{Our Techniques.}
Our hardness proof builds upon the $\ETH$-hardness of $\gapmaxlinq$ established by~\cite{BHIRW24}. In this problem, one must distinguish
whether a system of linear equations is almost satisfiable (a $c$-fraction of equations hold), or is far from satisfiable (at most an $s$-fraction hold).
Specifically, the result of~\cite{BHIRW24} shows that assuming $\ETH$, there is no algorithm that can solve $\gapmaxlinq$ in time $2^{o(n)}$.
By considering the dual of a hard $\gapmaxlinq$ instance, we first obtain a hard instance of $\gapmldq$, where the goal is to determine whether a target vector can be expressed as a linear combination of at most $\ell$ columns from a matrix, or it requires at least $\gamma \ell$ columns. This duality transformation implies that the non-parameterized version of $\gapkmldq$ also cannot be solved in time $2^{o(n)}$.

The core of our contribution is a reduction from this non-parameterized $\gapmldq$ to its parameterized version $\gapkmldq$. 
We let each vector in the output $\kmld$ instance be the sum of $\ell/k$ vectors in the original $\mld$ instance. Thus, a solution of size $\ell$ in the $\mld$ instance can be represented by only $k$ vectors in the $\kmld$ instance. However, the naive brute-force enumeration of all combinations of size $\ell/k$ produces an instance of size $N = \binom{n}{\ell/k}$, hence the reduction yields only an $N^{\Omega(k/\log k)}$ lower bound for $\gapkmldq$.

We achieve a tight lower bound via a novel combinatorial structure we term \emph{cover families}. 
A cover family is a sufficiently small collection of subsets $\mathcal{S}$ over a universe $\mathcal{U}$, such that any \emph{small} subset of $\mathcal{U}$ can be expressed as the union of $k$ pairwise disjoint sets from $\mathcal{S}$, whereas no \emph{large} subset of $\mathcal{U}$ can be expressed as the union of even $\gamma k$ sets from $\mathcal{S}$. To construct suitable cover families, we introduce an intermediate combinatorial object called \emph{balanced partition families}. A balanced partition family is a collection of partitions $\mathcal{P}$ of a universe $\mathcal{U}$ such that every partition in the family is roughly balanced, and every subset of $\mathcal{U}$ of an appropriate size is (almost) equipartitioned by some partition $P \in \mathcal{P}$. 
We first show that independently sampling random partitions suffices to obtain the desired balanced partition family. We then give a deterministic construction of such balanced partition families using hypercube set systems. Finally we  use a standard reduction from $\gapkmldq$ to $\gapkncpq$ to obtain the hardness of the latter.


\paragraph{Future Directions.}Assuming $\ETH$, we obtain optimal lower bounds for $\gapkmldq$ and $\gapkncpq$ for some constant approximation factor $\gamma>1$. A natural question is whether this lower bound can be extended to an arbitrary constant factor assuming only $\ETH$. In contrast, under $\GapETH$ the corresponding conclusion is known to hold for every constant $\gamma>1$ via \cite{Man20}. Establishing such a result under $\ETH$ would close the current knowledge gap between $\ETH$ and $\GapETH$ through the lens of parameterized coding-theoretic problems. 
We also wonder whether $n^{\Omega(k)}$ lower bound for $\gapkmdp$ or other related problems can be established under $\ETH$.
It is worth mentioning that \cite{BafnaSM25} rules out $n^{k/\log^C k}$-time algorithms that approximates $\kncp$ and $k\textsf{-NVP}$ to any constant factor under $\ETH$.
Another open direction is to understand what further consequences can be derived from the $\ETH$-hardness of $\gapmaxlinq$ established in \cite{BHIRW24}. Beyond our work, the only other result we are aware of that leverages this hardness result is \cite{agm25}. Lastly, one can also explore other applications of the combinatorial objects, \emph{cover family} and \emph{balanced partition family} which are introduced in this paper.

\paragraph{Paper Organization.} In \cref{sec:prelim_m}  we define necessary notation and introduce useful tools from
the literature. First in \cref{sec:maxlin_to_mld} we present a reduction from $\gapmaxlinq$ to $\gapkmldq$ using \textit{cover family}. Then in \cref{sec:constr} we give a randomized as well as a deterministic construction of a suitable \textit{cover family} using an intermediate combinatorial object, \textit{balanced partition family}. 



%% file: prelims_matrix.tex
\section{Preliminaries}
\label{sec:prelim_m}
We begin by formally defining the computational problems that will be studied throughout the paper. For each problem, we specify its input, the underlying computational goal, and any associated promise conditions.
\subsection{Computational Problems}
We first define the standard $\sat[k]$ problem. For an integer $k \geq 2$, a $\sat[k]$ formula over $n$ boolean variables is the conjunction of clauses, where each clause is the disjunction of $k$ literals. That is, $\sat[k]$ formulas have the form $\bigwedge_{i=1}^m \bigvee_{j=1}^k b_{i,j}$, where $b_{i,j} = x_k$ or $b_{i,j} = \neg x_k$ for some boolean variable $x_k$.

\begin{definition} [{$\sat[k]$}]
	For any $k \geq 2$, the decision problem $\sat[k]$ is defined as follows. The input is a $\sat[k]$ formula. It is a \yes instance if there exists an assignment to the variables that makes the formula evaluate to true and a \no instance otherwise.
\end{definition}
We write $\sat[k]_C$ for a $\sat[k]$ instance where each variable $x_i$ is contained in at most $C$ clauses. We also define the corresponding optimization version Max-$\sat[k]$.

\begin{definition}[{Max-$\sat[k]$}]
	For any $k \geq 2$, the decision problem Max-$\sat[k]$ is defined as follows. The input is a $\sat[k]$ formula and an integer $S \geq 1$. It is a \yes instance if there exists an assignment to the variables such that at least $S$ of the clauses evaluate to true and a \no instance otherwise.
\end{definition}

\begin{definition}[$\gapmaxlinq$]
    A $\gapmaxlinq$ instance $(\vec A,\vec b)$ consists of an $m \times n$ matrix $\vec A \in \F_q^{m \times n}$ and a vector $\vec{b} \in \F_q^m$.
    The objective of the problem is to distinguish between the following cases.
    \begin{itemize}
        \item $\yes$ Case: There exists $\vec{x}\in \F_q^n$ such that $\|\vec A\vec x-\vec b\|_0\le (1-c)m$.
        \item $\no$ Case: For every $\vec{x}\in \F_q^n$, $\|\vec A\vec x-\vec b\|_0> (1-s)m$.
    \end{itemize}  
\end{definition}

We need to mention that this problem is equivalent to the non-parameterized $\gapmldq[\frac{1-s}{1-c}]$.
Now we define the dual problem of $\maxlin$, Maximum Likelihood Decoding Problem.

\begin{definition}[$\gapmldq$]\label{def:gmld}
    A $\gapmldq$ instance $(\vec H,\vec u,\ell)$ consists of a $d\times n$ matrix $\vec H \in \F_q^{d\times n}$, a target vector $\vec u \in \F_q^d$ and a value $\ell$. The goal is to distinguish between the following cases.
    \begin{itemize}
        \item $\yes$ Case: There exists $\vec x\in\F_q^{n}$ with $\|\vec x\|_0\le \ell$ such that $\vec H\vec x=\vec u$.
        \item $\no$ Case: For every $\vec x\in\F_q^{n}$ with $\|\vec x\|_0\le \gamma\ell$, we have $\vec H\vec x\neq\vec u$.
    \end{itemize}
\end{definition}

We now define the parametrized version of $\gapmldq$, which is almost the same as its non-parameterized version, just using $k:=\ell$ as the parameter.
\begin{definition}[$\gapkmldq$]\label{def:gap-MLD}\label{def:gapkmld}
    A $\gapkmldq$ instance $(\vec H,\vec u)$ consists of a $d\times n$ matrix $\vec H \in \F_q^{d\times n}$, a target vector $\vec u \in \F_q^d$ and a parameter $k\in\mathbb{N}$. The goal is to distinguish between the following cases.
    \begin{itemize}
        \item $\yes$ Case: There exists $\vec x\in\F_q^{n}$ with $\|\vec x\|_0\le k$ such that $\vec H\vec x=\vec u$.
        \item $\no$ Case: For every $\vec x\in\F_q^{n}$ with $\|\vec x\|_0\le \gamma k$, we have $\vec H\vec x\neq\vec u$.
    \end{itemize}
\end{definition}

Next we turn to the parameterized Nearest Codeword Problem, which is closely related to MLD.

\begin{definition}[$\gapkncpq$]
\label{def:gap-NCP}
    A $\gapkncpq$ instance $(\vec A,\vec t)$ consists of a $n\times d$ matrix $\vec A \in \F_q^{n\times d}$, a target vector $\vec t \in \F_q^n$ and a parameter $k\in\mathbb{N}$. The goal is to distinguish between the following cases.
    \begin{itemize}
        \item $\yes$ Case: There exists $\vec{x}\in \F_q^d$ such that $\|\vec A\vec x-\vec t\|_0\le k$.
        \item $\no$ Case: For every $\vec{x}\in \F_q^d$, $\|\vec A\vec x-\vec t\|_0> \gamma k$.
    \end{itemize}

        
\end{definition}

Similarly we define parametrized versions of Closest Vector Problem ($\CVP$). 
For $\vec x \in \Z^d$ we will use the notation   $\| \vec{x} \|_p = \left( \sum_{i=1}^n |x_i|^p \right)^{1/p}$, in the following definition.

\begin{definition}[$\gapkcvpp$]
\label{def:gap-CVP}
    A $\gapkcvpp$ instance $(\vec A,\vec t)$ consists of a $n\times d$ matrix $\vec A \in \Z^{n\times d}$, a target vector $\vec t \in \Z^n$ and a parameter $k\in\mathbb{N}$. The goal is to distinguish between the following cases.
    \begin{itemize}
        \item $\yes$ Case: There exists $\vec{x}\in \Z^d$ such that $\|\vec A\vec x-\vec t\|_p\le k$.
        \item $\no$ Case: For every $\vec{x}\in \Z^d$, $\|\vec A\vec x-\vec t\|_p> \gamma k$.
    \end{itemize}

        
\end{definition}

\subsection{Fine-grained Hardness Assumptions}
We introduce the following fine-grained hardness assumptions, Exponential time hypothesis $\ETH$ and its corresponding gap version $\geth$ below.
\begin{definition}[Exponential Time Hypothesis ($\ETH$), \cite{IP01}]
	There exists $\delta >0$ such that any  algorithm which solves $\sat$ must take $2^{\delta n}$ time.
\end{definition}

\begin{definition}[$\geth$, \cite{gapeth}]
    There exists $\delta>0$ and $0<\eta<1$ such that given a $\sat$ instance with $n$ variables and $m$ clauses, any  algorithm which can distinguish between the cases if all $m$ clauses are satisfiable and one in which no assignment satisfies more than $\eta$-fraction of the clauses, must take $2^{\delta n}$ time.
\end{definition}

We now state the Sparsification Lemma.
\begin{lemma}[Sparsification Lemma, \cite{IP01}] \label{lemma:sp}
    Let $\varepsilon > 0$, $k \geq 3$ be constants. There is a $2^{\varepsilon n} \cdot \text{poly}(n)$ time algorithm that takes a $k$-CNF $F$ on $n$ variables and produces $F_1, \dots, F_{2^{\varepsilon n}}$, $2^{\varepsilon n}$ $k$-CNFs such that $F$ is satisfied if and only if $\bigvee_i F_i$ is satisfied and each $F_i$ has $n$ variables and $n \cdot \big(\frac{k}{\varepsilon}\big)^{O(k)}$ clauses. In fact, each variable is in at most $\mathrm{poly}\big(\frac{1}{\varepsilon}\big)$ clauses, and the $F_i$ are over the same variables as $F$.
\end{lemma}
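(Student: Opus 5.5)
Since this lemma is quoted from \cite{IP01}, I would follow the sunflower-branching argument of Impagliazzo, Paturi and Zane. Think of each clause of a $k$-CNF as a set of at most $k$ literals. A \emph{sunflower} of size $p$ with heart $H$ is a family of $p$ clauses $C_1,\dots,C_p$ such that $C_i\cap C_j=H$ for all $i\neq j$; the sets $C_i\setminus H$ are its petals. For each heart size $h$ and clause size $c>h$ I fix a threshold $\theta_{c,h}$, which depends only on $k$ and $\varepsilon$ and grows with $c$.

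The procedure is a branching recursion. Given the current formula $F$, first remove every clause that contains another clause (subsumption). Then look for a sunflower $C_1,\dots,C_p$ with $p\ge \theta_{|C_1|,|H|}$; I take the smallest clause size first and then the largest heart. If one exists, branch into two formulas:
\[
F_{\mathrm{heart}}=F\wedge H \qquad\text{and}\qquad F_{\mathrm{petal}}=F\wedge\bigwedge_{i}(C_i\setminus H).
\]
In $F_{\mathrm{heart}}$ the clause $H$ subsumes all the $C_i$, which are then deleted. The two branches are sound: $H$ is either satisfied or falsified, and if it is falsified every $C_i$ is equivalent to its petal. So $F$ is satisfiable if and only if one of the two branches is. The leaves of the recursion are formulas containing no large sunflower; these are the $F_i$. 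All of them lie over the original variable set.

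\textbf{Degree bound at the leaves.} Fix a leaf and a variable $x$, and consider the clauses of a fixed size $c$ that contain $x$ (or $\neg x$). Apply the Erdős–Rado sunflower lemma to these clauses with the literal removed. Because no sunflower of size $\theta_{c,h}$ exists for any $h\ge 1$, the number of such clauses is at most $c!\,(\max_h\theta_{c,h})^{c}$. Summing over $c\le k$ bounds the degree of every variable by a quantity depending only on $k$ and $\varepsilon$. With the thresholds chosen below this is $(k/\varepsilon)^{O(k)}$ (for fixed $k$, $\mathrm{poly}(1/\varepsilon)$), which gives at most $n\cdot (k/\varepsilon)^{O(k)}$ clauses in total. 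Each step of the recursion is polynomial time, so the running time is $\mathrm{poly}(n)$ times the number of leaves.

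\textbf{Main obstacle: bounding the number of leaves by $2^{\varepsilon n}$.} The heart branch is cheap: it adds one clause of size $h$, and the net number of clauses decreases. The petal branch instead adds $p\ge\theta$ new clauses of size $c-h$. My plan is a charging argument.
\begin{itemize}
\item First, no clause of size $j$ is ever created more than a bounded number of times per variable. Indeed, once there are many clauses of size $j$, sunflowers among them are eliminated before larger clauses are processed, so the number of clauses of size $j$ ever present is $O(\beta_j n)$ for constants $\beta_j$.
\item Second, along any root-to-leaf path each petal branch creates at least $\theta$ new smaller clauses. The number of petal branches on a path is therefore at most $\sum_j \beta_j n/\theta_{\cdot,\cdot}$, while the number of heart branches is at most about $\sum_j \beta_j n$.
\item A path with $a$ heart steps and $b$ petal steps can then be encoded by choosing which steps are petals, so the number of leaves is at most $\binom{a+b}{b}\le 2^{(a+b)H(b/(a+b))}$.
\end{itemize}
Choosing the thresholds to grow fast enough (roughly $\theta_{c,h}\approx (k/\varepsilon)^{O(c)}$, increasing in $c$) makes $b/(a+b)$ small enough that this binomial is at most $2^{\varepsilon n}$. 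The delicate part is choosing the $\beta_j$ and $\theta_{c,h}$ consistently, so that the per-size clause counts stay linear while the petal fraction stays small. This is where I expect the real work to lie, and I would set them by downward induction on clause size as in \cite{IP01}.
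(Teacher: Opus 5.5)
The paper does not prove this lemma; it takes it from \cite{IP01}, and your outline follows that source's sunflower-branching strategy. The soundness of the two branches and the leaf-degree bound via Erd\H{o}s--Rado are fine. The gap is in the first bullet of your leaf count, which is where the real difficulty lies. Both the heart count and the petal count on a path need the \emph{cumulative} number $A_j$ of clauses of size $j$ created along that path. Your justification (sunflowers among $j$-clauses are removed before larger sizes are processed) only bounds how many $j$-clauses are present at a single moment. Along a path, $j$-clauses are repeatedly deleted (consumed by a step at size $j$, or subsumed by a newly created shorter clause) and new ones are created, and nothing in your argument bounds this turnover. Without a linear bound on $A_j$, the number of steps at size $k$ is a priori only bounded by $(2n)^k/\min_h\theta_{k,h}$. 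Also, ``ever present is $O(\beta_j n)$'' is false for $j=k$ and for short input clauses; only created clauses, all of size $<k$, can be counted.

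The missing accounting has three parts.
(a) In a subsumption-free formula, neither the heart nor a petal of a sunflower contains a current clause, since such a clause would be a proper subset of some $C_i$. So every created clause is new, and a deleted clause never reappears, because some current clause always stays inside it.
(b) A clause is deleted only because a strictly shorter, newly created clause is contained in it.
(c) Clauses of size $j$ are created only at steps of size $c>j$. At those moments no above-threshold sunflower exists among $j$-clauses, since the smallest size is processed first, and between such steps $j$-clauses are only deleted. Moreover, one step creates at most one $j$-clause through a given literal: either a single heart, or pairwise disjoint petals. Hence, from the first step of size $>j$ on, every literal lies in at most $d_j+1$ clauses of size $j$, where $d_j=(j-1)!\,(\max_h\theta_{j,h})^{j-1}$.

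So each created clause of size $<j$ deletes at most $d_j+1$ created $j$-clauses. Counting survivors at the leaf plus deletions gives
\[
A_1\le 2n,\qquad A_j\le 2d_jn+(d_j+1)\sum_{i<j}A_i\quad(2\le j<k),
\]
which is the bound $A_j\le\beta_j n$ your other two bullets rely on.

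Separately, your parameters do not give the stated density. With $\theta_{c,h}=(k/\varepsilon)^{\Theta(c)}$, Erd\H{o}s--Rado gives $c!\,\theta^{c}=(k/\varepsilon)^{\Theta(c^2)}$ per variable, not $(k/\varepsilon)^{O(k)}$. Your single-binomial count also forces the petal-size-$i$ thresholds to exceed roughly $\beta_i/\varepsilon\ge d_i/\varepsilon$, which is an $(i-1)$-st power of the size-$i$ thresholds, so the exponents compound. Completed, your argument gives $n\cdot C(k,\varepsilon)$ clauses with $C(k,\varepsilon)=\mathrm{poly}(1/\varepsilon)$ for each fixed $k$. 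That is all this paper needs ($k=3$, before Tovey's reduction), but it is not $n\,(k/\varepsilon)^{O(k)}$.
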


We will now state the following result from \cite{BHIRW24} which gives us a very strong starting point for establishing hardness results.

\begin{theorem}[\cite{BHIRW24}, Theorem 6.3] \label{thm:satToLinRed}
    For every finite field $\F_q$, there exists constants $c$ and $s$ where $0<s<c<1$, such that the following holds. There exists a polynomial time reduction which takes a $\sat_C$ instance with $n$ variables, and outputs a $\gapmaxlinq$ with $n'=O(n)$ variables and $m'=O(n)$ equations.
\end{theorem}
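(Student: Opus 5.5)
This statement is imported verbatim from \cite{BHIRW24}, so the plan is to reconstruct the shape of their argument rather than invent a new one. I should say up front that the crucial step, creating a constant gap with only linear blow-up, is exactly where the outline below is not a proof, and I have no concrete construction for it.

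\textbf{Step 1: linear-size encoding with a tiny gap.} Start from a $\sat_C$ instance $\varphi$ on $n$ variables. Because every variable lies in at most $C$ clauses, $\varphi$ has $m=O(n)$ clauses. Replace each clause by a constant-size gadget of equations over $\F_q$ on the three original variables plus $O(1)$ fresh auxiliary variables, in the style of the standard Max-3Lin gadgets. The gadget should have the following property. If the clause is satisfied, some setting of the auxiliary variables violates at most $a$ of the gadget's equations. If it is violated, every setting violates at least $a+1$. Booleanity of the original variables would be enforced by further local equations if $q>2$.

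This yields a system $(\vec A_0,\vec b_0)$ with $O(n)$ variables and equations, and with a threshold $T$ (roughly $a$ per clause) such that:
\begin{itemize}
\item if $\varphi$ is satisfiable, the optimum violates at most $T$ equations;
\item if $\varphi$ is unsatisfiable, the optimum violates at least $T+1$ equations.
\end{itemize}
The gadget is built so that completeness is already imperfect. That is the feature which later permits $c<1$ instead of requiring perfect completeness.

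\textbf{Step 2: amplify the additive gap of one equation to a constant fraction.} I cannot do this with a generic PCP, because the known PCPs are only quasi-linear in size and would destroy the $2^{\Omega(n)}$ lower bound. The idea I would pursue uses the structure of linear equations. The violated set is the support of the residual $\vec r=\vec A\vec x-\vec b$, so one could compose with a good linear code: take $\vec G\vec A\vec x-\vec G\vec b$ for $\vec G$ the generator of an asymptotically good code, an expander-based code if explicitness is needed. A nonzero residual then becomes a constant-fraction violation.

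This alone cannot work, since it only detects $\vec r\neq 0$, and exact linear solvability is easy. The real approach has to work with the residual pattern of a specific form: roughly, zero on $T$ prescribed-but-unknown coordinates is allowed, and anything beyond that must be amplified. For that I would try a local-testing or tensor-code style construction that keeps $O(n)$ equations, and check that the soundness degrades only to some $s<c$.

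\textbf{Step 3: bookkeeping.} Verify that the number of variables $n'$ and equations $m'$ remain $O(n)$, with constants depending on $C$ and $q$. Also verify that the reduction runs in polynomial time (deterministic if explicit codes are used).

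\textbf{Main obstacle.} Step 2 is the main obstacle: a linear-size, constant-gap reduction with imperfect completeness. This is precisely the content of \cite{BHIRW24}, and I expect it to require their specific machinery rather than off-the-shelf tools. For the purposes of this paper I would simply invoke their Theorem 6.3 as stated.
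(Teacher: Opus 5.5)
The paper gives no proof of this statement. It is quoted from \cite{BHIRW24} and used purely as a black box, combined with the Sparsification Lemma and Tovey's reduction to obtain \cref{cor:linEthHard}. So your closing move, invoking their Theorem~6.3 as stated, is exactly the paper's treatment and is the right way to handle it here.

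Read as a proof, however, your outline has a genuine gap, and the gap is the whole theorem: Step~2 contains no gap-creation mechanism. Step~1 is standard, but it yields only an additive gap of one violated equation among $\Theta(n)$. (Also, for $q>2$ booleanity cannot be \emph{enforced} by linear equations, only penalized, e.g.\ by the pair $x=0$, $x=1$.) Composing with a good code, as you observe, only separates zero from nonzero residuals. It therefore collapses the problem to exact solvability, which Gaussian elimination decides.

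It is worth recording where the missing idea has to live. The paper's $\gapmaxlinq$ allows an arbitrary dense $\vec A$, and this freedom is essential. The same statement for equations of bounded arity would be a linear-size constant-query PCP for $\sat$ (with imperfect completeness), which is a well-known open problem. Hence any proof must exploit equations of unbounded arity. An amplification step that keeps the equations sparse, as an LTC-style local-testing construction would, would settle that open problem along the way. Since neither you nor the paper supplies this step, the sketch should be presented as motivation rather than as a reconstruction of \cite{BHIRW24}.
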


The sparsification \cref{lemma:sp} and Tovey's reduction \cite{TOVEY} together tell us that if $\ETH$ is true, then $\sat_4$ over $n$ variables must take $2^{\delta n}$ time for some $\delta>0$.
Together with \Cref{thm:satToLinRed}, we find that if $\ETH$ holds, then for some $0<s<c<1$ and $C>0$, any algorithm which solves $\gapmaxlinq$ with $n$ variables and $m\leq Cn$ clauses, must take $2^{\delta n}$ time for some $\delta > 0$. We state it as the following corollary.

\begin{corollary}[$\gapmaxlinq$ is $\ETH$-Hard] \label{cor:linEthHard}
	For every finite field $\F_q$, there exists constants $0<s<c<1$ and $C > 0$ such that unless $\mathsf{ETH}$ is false, any algorithm for $\gapmaxlinq$ with $n$ variables and $m\leq Cn$ equations must take $2^{\delta n}$ time for some $\delta > 0$.
\end{corollary}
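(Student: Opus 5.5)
The plan is to chain the results quoted above. I start from an $\ETH$-hard instance of $\sat$ and apply the Sparsification Lemma (\cref{lemma:sp}) with a small constant $\varepsilon$. This produces $2^{\varepsilon n}$ formulas $F_i$ on the same $n$ variables, each with $O(n)$ clauses and each variable appearing in at most $\mathrm{poly}(1/\varepsilon)$ clauses. If $\sat$ requires $2^{\delta n}$ time and $\varepsilon<\delta/2$, then some algorithm for the sparse instances running in time $2^{o(n)}$ would solve the original formula in $2^{\varepsilon n}\cdot 2^{o(n)} < 2^{\delta n}$ time, a contradiction. Hence sparse $\sat$ with bounded occurrence $C_0=\mathrm{poly}(1/\varepsilon)$ needs $2^{\delta' n}$ time.

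Next I reduce occurrences to a fixed constant using Tovey's reduction \cite{TOVEY}. A variable appearing $t$ times is replaced by $t$ fresh copies, which are forced equal by a cycle of implications $(\neg y_1\vee y_2),(\neg y_2\vee y_3),\dots,(\neg y_t\vee y_1)$. Each copy then appears in at most $3$ clauses, so this gives $\sat_4$. Because the total number of occurrences is $O(n)$ after sparsification, the new instance has $n'=O(n)$ variables. Some clauses now have width $2$; if \cref{thm:satToLinRed} requires exact width $3$, I pad them in the standard way, which keeps the variable count linear and the occurrence bound constant. So $\sat_C$ on $n'$ variables still requires $2^{\delta'' n'}$ time.

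Finally I apply \cref{thm:satToLinRed} to this $\sat_C$ instance. It produces, in polynomial time, a $\gapmaxlinq$ instance with $N=O(n')$ variables and $m=O(n')\le C N$ equations, for constants $0<s<c<1$ depending only on $q$. The requirement $m\le CN$ needs the variable count to be at least a constant fraction of $n'$. If it is smaller, I add dummy variables that appear in no equation; this affects neither the yes nor the no case. An algorithm for $\gapmaxlinq$ running in time $2^{\delta_0 N}$ with $\delta_0$ small would then give a $2^{O(\delta_0) n}$-time algorithm for $\sat$, contradicting $\ETH$ once $\delta_0$ is small enough.

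The main obstacle is the bookkeeping of linear blowups. I must check that every step, namely sparsification, Tovey's reduction and \cref{thm:satToLinRed}, keeps the variable count within a constant factor. I also must check that the constants $c,s,C$ are fixed before $\delta$ is chosen, so that $\delta$ can simply be taken as the $\ETH$ constant divided by the product of the blowup factors.
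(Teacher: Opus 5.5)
Your proposal is correct and follows the same route as the paper. The paper likewise combines the Sparsification Lemma with Tovey's reduction to get $\ETH$-hardness of bounded-occurrence $\sat$, and then applies the linear-size reduction of \cite{BHIRW24} (\cref{thm:satToLinRed}). Your extra bookkeeping (choosing $\varepsilon$ below the $\ETH$ constant, padding width-2 clauses, and adding dummy variables to guarantee $m\le CN$) makes explicit details that the paper leaves implicit.
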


%% file: maxlin_to_gap_mdp_matrix.tex
\section{Reduction from \texorpdfstring{$\gapmaxlin$ to $\gapkmldq$}{MAXLIN to k-MLD}}
\label{sec:maxlin_to_mld}

In this section, we present reductions from $\gapmaxlin$ to $\gapkmldq$.  We first give a simple reduction that yields $N^{\Omega(k/\log k)}$ lower bound. Then we provide a reduction  for the tight lower bound. The reduction contains two steps.

\paragraph{Step 1: Reduction from $\gapmaxlin$ to $\gapmldq$.}
\begin{lemma}\label{lem:maxlin2mld}
    There exists a polynomial-time reduction that takes as input a $\gapmaxlin$ instance $(\vec A,\vec b)$ where $\vec{A} \in \mathbb{F}_q^{m \times n}$, $\vec{b} \in \mathbb{F}_q^m$ and $0<s<c<1$. The reduction outputs a $\gapmldq$ instance $(\vec H,\vec u,\ell)$ where $\vec{H}\in \mathbb{F}_q^{d\times m}\ (d\le m)$, $\vec{u} \in \mathbb{F}_q^d$, $\ell = (1-c)m$ and $\gamma = \frac{1-s}{1-c}$. This reduction satisfies the following properties:
    \begin{itemize}
        \item \textbf{Completeness:} If there exist $\vec{x} \in \mathbb{F}_q^n$ and $\vec{e} \in \mathbb{F}_q^m$ such that $\|\vec{e}\|_0 \le (1-c)m$ and $\vec{A}\vec{x} + \vec{e} = \vec{b}$, then there exists $\vec x'\in\F_q^m$ with $\|\vec x'\|_0\le\ell$ such that $\vec H\vec x'=\vec{u}$.
        
        \item \textbf{Soundness:} If for every $\vec{x} \in \mathbb{F}_q^n$ and $\vec{e} \in \mathbb{F}_q^m$ satisfying $\vec{A}\vec{x} + \vec{e} = \vec{b}$, we have $\|\vec{e}\|_0 > (1-s)m$, then there is no $\vec x'\in\F_q^m$ with $\|\vec x'\|_0\le\gamma\ell$ such that $\vec H\vec x'=\vec{u}$.
    \end{itemize}
\end{lemma}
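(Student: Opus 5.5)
The plan is the standard syndrome (parity-check) duality between nearest-codeword and decoding. Let $V=\{\vec A\vec x:\vec x\in\F_q^n\}\subseteq\F_q^m$ be the column space of $\vec A$, of dimension $r\le n$. By Gaussian elimination (polynomial time) we compute a matrix $\vec H\in\F_q^{d\times m}$ with $d=m-r\le m$ whose kernel is exactly $V$. Concretely, we take a basis of the left null space $\{\vec y:\vec y^{\top}\vec A=0\}$ and use its vectors as the rows of $\vec H$. We then set $\vec u:=\vec H\vec b$, $\ell:=(1-c)m$ and $\gamma:=\frac{1-s}{1-c}$. The key identity we rely on is
\[
\{\vec e\in\F_q^m : \exists \vec x,\ \vec A\vec x+\vec e=\vec b\} \;=\; \vec b+V \;=\; \{\vec e : \vec H\vec e=\vec u\},
\]
which holds because $\ker\vec H=V$.

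\textbf{Completeness.} Suppose $\vec A\vec x+\vec e=\vec b$ with $\|\vec e\|_0\le(1-c)m$. Applying $\vec H$ and using $\vec H\vec A=0$ gives $\vec H\vec e=\vec H\vec b=\vec u$. So we take $\vec x':=\vec e$, which has weight at most $\ell$.

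\textbf{Soundness.} We argue by contrapositive. Suppose $\vec H\vec x'=\vec u$ with $\|\vec x'\|_0\le\gamma\ell=(1-s)m$. Then $\vec H(\vec b-\vec x')=0$, so $\vec b-\vec x'\in\ker\vec H=V$. Hence $\vec b-\vec x'=\vec A\vec x$ for some $\vec x$, i.e.\ $\vec A\vec x+\vec x'=\vec b$ with $\|\vec x'\|_0\le(1-s)m$. This contradicts the soundness hypothesis.

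There is no real obstacle here. The only point needing care is ensuring $\ker\vec H$ equals the column space of $\vec A$, not merely contains it. This holds by the rank count: the left null space has dimension $m-r$, so $\ker\vec H$ has dimension $r$ and contains $V$, forcing equality. We also note that $\ell$ may need to be taken as $\lfloor(1-c)m\rfloor$ for integrality, which changes nothing.
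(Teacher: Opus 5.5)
Your proposal is correct and follows the paper's proof. You take a parity-check matrix $\vec H$ for the code generated by $\vec A$, set $\vec u=\vec H\vec b$, use $\vec H\vec A=\vec 0$ for completeness, and use that $\ker\vec H$ equals the column space of $\vec A$ for soundness. Your dimension count establishing that equality, and computing $\vec H$ by Gaussian elimination rather than the paper's ``Gram--Schmidt'' (which does not straightforwardly apply over finite fields), spell out a step the paper leaves implicit.
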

\begin{proof}
    Let $\vec H$ denote any parity-check matrix for the code generated by $\vec A$, thus
        $\vec H \in \F_q^{d \times m}$ and $\vec H\vec A=\vec 0$ where $d:=m - \operatorname{rank}(\vec A)$.
    Given $\vec A$, the matrix $\vec H$ can be computed in polynomial time via Gram–Schmidt orthogonalization applied to an appropriate basis of $\F_q^m$. We now convert the $\gapmaxlin$ instance $(\vec A, \vec b)$ to its dual form by multiplying both sides of the equation
    $\vec A \vec x + \vec e = \vec b$ by $\vec H$ on the left.  
    Setting $\vec u := \vec H \vec b$, $\gamma:=\frac{1-s}{1-c}$ and $\ell:=(1-c)m$,
    we obtain the corresponding dual instance $(\vec H, \vec u)$ satisfying:
    \begin{itemize}
        \item \textbf{Completeness.}
        If there exists $\vec x\in\F_q^n$ and $\vec e\in\F_q^m$ such that $\|\vec e\|_0\le\ell=(1-c)m$ and $\vec A\vec x+\vec e=\vec b$, then $\vec H\vec A\vec x+\vec H\vec e=\vec H\vec b$, hence $\vec H\vec e=\vec u$.
    
        \item \textbf{Soundness.}
        Assume for the sake of contradiction that there exists $\vec e\in\F_q^m$ with $\|\vec e\|_0\le\gamma\ell=(1-s)m$ such that $\vec H\vec e=\vec u$, then we have $\vec H(\vec e-\vec b)=\vec 0$, hence $\vec e-\vec b=\vec A\vec x$ for some $\vec x\in\F_q^n$.
    \end{itemize}
    Hence we get a polynomial-time reduction from $\gapmaxlin$ to $\gapmldq$.
\end{proof}

\paragraph{Step 2: Reduction from $\gapmldq$ to $\gapkmldq$.}
Next we will show a reduction from a non-parameterized $\mld$ instance to a parameterized one.
Consider a $\gapmldq$ instance $(\vec M,\vec u,\ell)$.
To obtain the parametrized instance, we need to ``scale down’’ the solution size from $\ell$ to $k$
by grouping the column vectors of $\vec{M}$ into appropriately sized blocks, and construct a new matrix in which each column vector behaves like an aggregated vector.  
This grouping ensures that, in the $\yes$ case, selecting $k$ such aggregated vectors corresponds to selecting $\ell$ original vectors. For the reduction to be sound in the $\no$ case, we additionally require that the size of each group be bounded by $\ell/k$, so that any choice of at most $\gamma k$ aggregated vectors corresponds to at most $\gamma \ell$ original column vectors in $\vec{M}$.

\begin{lemma}\label{lem:naive_group}
    For every $\varepsilon>0$, there exists a reduction that takes as input an integer $k\in\N$, a $\gapmldq$ instance $(\vec M,\vec u,\ell)$ where $\vec M\in \mathbb{F}_q^{d\times m}$, $\vec{u} \in \mathbb{F}_q^d$, and $\frac{k}{\varepsilon}<\ell<\frac{m}{\gamma}$. The reduction outputs a new matrix $\vec{M}_k\in\F_q^{d\times m'}$ in $(m')^{O(1)}$ time, which satisfies the following properties:
    \begin{itemize}
        \item \textbf{Size:} $m'= 2^{O(\frac{m \log k}{k})}$.
        
        \item \textbf{Completeness:}
        If there exists $\vec{x}\in\F_q^{m}$ with $\|\vec x\|_0\le\ell$ such that $\vec M\vec x=\vec{u}$, then there exists $\vec y\in\F_q^{m'}$ with $\|\vec y\|_0\le k$ such that $\vec M_k\vec y=\vec{u}$.
        
        \item \textbf{Soundness:}
        If there is no $\vec{x}\in\F_q^{m}$ with $\|\vec x\|_0\le\gamma\ell$ such that $\vec M\vec x=\vec{u}$, then for $\gamma' = \gamma-\varepsilon$, there is no $\vec{y}\in\F_q^{m'}$ with $\|\vec y\|_0\le\gamma' k$ such that $\vec M_k\vec y=\vec{u}$.
    \end{itemize}
\end{lemma}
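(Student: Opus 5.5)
The plan is to build $\vec M_k$ by brute-force enumeration of small column combinations, as sketched in the introduction. Set $t := \lceil \ell/k \rceil$. For every subset $S \subseteq [m]$ with $1\le |S| \le t$ and every coefficient assignment $\vec c \in (\F_q\setminus\{0\})^S$, add the column $\sum_{i\in S} c_i \vec M_{*,i}$ to $\vec M_k$. The number of columns is $m' \le \sum_{j\le t}\binom{m}{j}(q-1)^j \le (qm)^{t+1}$. Since $\ell < m/\gamma$ and $\ell/k < m/k$, we have $t = O(m/k)$. I would then bound $\binom{m}{t}\le (em/t)^t$ and note $em/t = O(mk/\ell)$. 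Using $\ell > k/\varepsilon$ is not enough to control this ratio, so I would rather use that in the intended application $\ell = (1-c)m = \Theta(m)$, which gives $em/t = O(k)$ and hence $m' = 2^{O(t\log k)} = 2^{O(\frac{m\log k}{k})}$, treating $q$ and $\varepsilon,c$ as constants. If the statement must hold for general $\ell$, I would replace $m$ in the bound by noting $t \le \ell/k+1$ and $\binom{m}{t}\le 2^{m H(t/m)}$, which is again $2^{O(\frac{m\log k}{k})}$ since $t/m = O(1/k)$. This entropy bound seems cleaner, and I would use it. Constructing the matrix clearly takes time polynomial in $m'$.

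\textbf{Completeness.} Given $\vec x$ with $\|\vec x\|_0\le \ell$ and $\vec M\vec x=\vec u$, partition its support into at most $k$ blocks $S_1,\dots,S_k$, each of size at most $t$. This is possible because $kt\ge \ell$. For each block, $\sum_{i\in S_j} x_i \vec M_{*,i}$ is exactly one column of $\vec M_k$. Putting coefficient $1$ on those at most $k$ columns gives $\vec y$ with $\|\vec y\|_0\le k$ and $\vec M_k\vec y=\vec u$.

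\textbf{Soundness.} I argue by contrapositive. Suppose $\vec M_k\vec y = \vec u$ with $\|\vec y\|_0\le \gamma' k$. Expanding each chosen column back into at most $t$ original columns writes $\vec u$ as $\vec M\vec x$, where $\|\vec x\|_0\le \gamma' k\, t$; overlapping supports only merge, so the weight can only drop. Now $\gamma' k t \le \gamma' k(\ell/k+1) = \gamma'\ell + \gamma' k$. Using $k<\varepsilon\ell$, this is less than $(\gamma-\varepsilon)\ell + \gamma\varepsilon\ell$, which is not quite $\gamma\ell$. So I would fix the parameters slightly: either take $t=\lfloor \ell/k\rfloor$ and pad completeness with an extra block, or rescale $\varepsilon$ by a constant factor (replace $\varepsilon$ by $\varepsilon/(\gamma+1)$ in the hypothesis $\ell>k/\varepsilon$). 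Either way I aim for $\gamma' k t\le \gamma\ell$, which contradicts the NO assumption.

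This bookkeeping with the floor/ceiling slack is the only delicate step. The main obstacle is matching the exact constants in $\gamma'=\gamma-\varepsilon$ together with the size bound, rather than any conceptual issue.
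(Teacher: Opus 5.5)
Your construction, completeness argument and soundness expansion are the same as the paper's. The paper also indexes columns by all $\vec\alpha$ with $\|\vec\alpha\|_0\le\lceil\ell/k\rceil$, uses columns $\vec M\vec\alpha$, and bounds the size via $\binom{m}{m/k}\le(\mathrm{e}k)^{m/k}$; this is the same estimate as your entropy bound. Your worry about the ratio $\mathrm{e}m/t$ is unnecessary. Since $(\mathrm{e}m/t)^t$ is increasing in $t$ for $t\le m$, the worst case is $t\approx m/k$. The crude bound $(qm)^{t+1}$ is too weak and should be dropped.

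You are right to hesitate at the final soundness inequality, and the paper's proof is wrong at exactly that point. From $\|\vec x\|_0\le\gamma'(\ell+k)$ and $k<\varepsilon\ell$ it concludes $\|\vec x\|_0<(\gamma'+\varepsilon)\ell=\gamma\ell$. But one only gets $\|\vec x\|_0<\gamma'(1+\varepsilon)\ell$, and $\gamma'(1+\varepsilon)\le\gamma'+\varepsilon$ holds only when $\gamma'\le 1$, i.e.\ when there is no gap at all.

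The statement genuinely fails for this construction. Take:
\begin{itemize}
\item $\gamma=1.5$, $\varepsilon=0.1$, $k=10$ (so $\gamma'k=14$);
\item $\ell=101$ and $m=160$;
\item $\vec M$ the $160\times 160$ identity, and $\vec u$ a $0/1$ vector of weight $154$.
\end{itemize}
All hypotheses hold, since $k/\varepsilon=100<101<m/\gamma\approx106.7$. The instance is a NO instance because $154>\gamma\ell=151.5$. Yet $\vec u$ is the sum of $14$ columns $\vec M\vec\alpha$ with disjoint supports of size $t=11$.

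Of your two repairs, the first fails. With $t=\lfloor\ell/k\rfloor$, a support of size $\ell$ may need $k+1$ blocks (e.g.\ $\ell=101$, $k=10$), which violates $\|\vec y\|_0\le k$.

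The second repair is the right one. It changes the hypothesis, but that is unavoidable here. Assuming $\ell\ge\gamma'k/\varepsilon$ (your $(\gamma+1)k/\varepsilon$ is more than enough) gives $\gamma'kt\le\gamma'\ell+\gamma'k\le\gamma\ell$. Alternatively, keep $\ell>k/\varepsilon$ and set $\gamma'=\gamma/(1+\varepsilon)$, using $kt<(1+\varepsilon)\ell$.

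Either corrected version suffices for the lemma's only role in the paper, which is to motivate the $N^{\Omega(k/\log k)}$ bound.
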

\begin{proof}
    Let $r:=\big\lceil \frac{\ell}{k} \big\rceil$, $A:=\{\vec\alpha\in\F_q^m:\|\vec \alpha\|_0\le r\}$, and $m':=|A|$.
    We construct the new matrix $\vec M_k\in\F_q^{d\times m'}$ as follows.
    We establish a bijection between the column indices of $\vec M_k$ and $A$, and use $\vec\alpha\in A$ to denote a column index of $\vec M_k$.
    For each $\vec\alpha\in A$, we let the $\vec\alpha$-th column of $\vec M_k$ be
    $$
        \vec M_k[\vec\alpha] = \vec M\vec \alpha \in\F_q^d.
    $$
    Each column vector in $\vec{M}_k$ is therefore a linear combination of at most $r$ column vectors from $\vec M$.
    Using the fact that $r= \big\lceil \frac{\ell}{k}\big\rceil <\frac{m}{k}$, the size of $\vec M_k$ satisfies
    $$
        m' = \sum_{i=1}^r (q-1)^i\binom{m}{i} < rq^{\frac{m}{k}} \binom{m}{\frac{m}{k}}
        < rq^\frac{m}{k}(\e k)^{\frac{m}{k}}
        = 2^{O(\frac{m  \log k}{k})}.
    $$
    
    \subparagraph{Completeness.}
    Assume that there exists $\vec{x}\in\F_q^{m}$ with $\|\vec x\|_0\le\ell$ such that $\vec M\vec x=\vec{u}$.
    We construct $k$ vectors $\vec x_1,\dots,\vec x_k$ such that $\|\vec x_i\|_0\le r$ and $\sum_{i=1}^k\vec x_i=\vec x$. Such vectors exist and can be constructed in polynomial time.
    Let $\vec y\in\F_q^{m'}$ be such that $\vec y[\vec x_i]=1$ for every $i\in[k]$, and $\vec y[\vec\alpha]=0$ for any other $\vec \alpha\in A$.
    By construction, we have
    $$
        \vec M_k\vec y
        =\sum_{i=1}^k \vec M_k[\vec x_i]
        =\sum_{i=1}^k \vec M\vec x_i
        =\vec M\vec x=\vec u.
    $$
        
    \subparagraph{Soundness.}
    Assume for the sake of contradiction that there exists $\vec{y}\in\F_q^{m'}$ with $\|\vec y\|_0\le\gamma' k$ such that $\vec M_k\vec y=\vec{u}$.
    Let $X:=\supp(\vec y)$, then $|X|=\|\vec y\|_0\le \gamma' k$, and we have
    $$
        \vec u=\vec M_k\vec y=\sum_{\vec \alpha\in X}\vec M_k[\vec\alpha]\cdot \vec y[\vec\alpha] = \sum_{\vec\alpha\in X}\vec M\vec\alpha\cdot \vec y[\vec\alpha].
    $$
    Let $\vec x:=\sum_{\vec\alpha\in X}\vec\alpha\cdot \vec y[\vec\alpha]$, then we have $\vec M\vec x=\vec u$. Moreover, since $\|\vec \alpha\|_0\le r$ for every $\vec\alpha\in A$, we have
    $$
        \|\vec x\|_0
        \le \sum_{\vec\alpha\in X}\|\vec\alpha\|_0
        \le r|X|
        = \gamma'k\ceil{\frac{\ell}{k}}
        \le \gamma'(\ell+k).
    $$
    Thus when $\ell>\frac{k}{\varepsilon}$, we have $\|\vec x\|_0<(\gamma'+\varepsilon)\ell=\gamma\ell$, proving the soundness.
\end{proof}

By \cref{lem:maxlin2mld} and \cref{lem:naive_group}, we obtain a reduction from $\gapmaxlin$ with $m$ equations to $\gapkmldq$ with size $N\le 2^{O(\frac{m \log k}{k})}$ for some constant $\gamma > 1$, hence ruling out $N^{o(\frac{k}{\log k})}$-time algorithms for solving $\gapkmldq$.

Note that in our construction, when we group the vectors to form aggregated vectors, we chose all subsets of size at most $\ceil{\ell/k}$. This brute approach already gave us an almost tight lower bound for $\gapkmldq$. To get a tight lower bound we would want to get a reduction where the number of aggregated vectors is upper bounded by $2^{O(m/k)}$. In the following section, we will make use of a combinatorial gadget, called the ``cover family''. This gadget will essentially enable us to do the grouping more cleverly and getting a tight lower bound.

\subsection{Grouping using Cover Families}
To get a tighter bound in the above reduction, we consider reducing the number of columns in $\vec M_k$. This requires us to find a smaller subset $A\subseteq \F_q^m$ that maintains the completeness and soundness requirements as follows:
\begin{itemize}
    \item (Completeness) For every $\vec x\in\F_q^m$ with $\|\vec x\|_0\le \ell$, there exist $\vec x_1,\dots,\vec x_k\in A$ such that $\sum_{i=1}^k\vec x_i=\vec x$.
    \item (Soundness) For every $\vec x\in\F_q^m$, if there exist $\vec x_1,\dots,\vec x_{\gamma' k}\in A$ such that $\sum_{i=1}^{\gamma'k}\vec x_i=\vec x$, then $\|\vec x\|_0\le \gamma\ell$.
\end{itemize}
To construct such set $A$, we will make use of a combinatorial gadget which we call ``cover family''.
It produces a gap between the number of sets required to cover a small set and a large set.
Let $A$ be the set of all vectors whose support is in the cover family, then $A$ meets all requirements above.
The gadget is formally defined as follows.


\begin{definition}[$(U,k,\alpha,\varepsilon)$-cover family]
    A $(U,k,\alpha,\varepsilon)$-cover family  is a collection $\Scal\subseteq 2^{U}$ of subsets of $U$, such that
    \begin{itemize}
        \item (C1) For every $S\in\Scal$, $|S|\le\frac{(1+\varepsilon)\alpha |U|}{k}$.
        \item (C2) For every $\tilde{S}\subseteq U$ with $|\tilde{S}|\le\alpha |U|$, there exists $\{T_1,\dots,T_k\}\subseteq\Scal$ such that $\bigcup_{i\in[k]}T_i=\tilde{S}$. Moreover, $T_i\cap T_j=\emptyset$ for every $i\neq j\in[k]$.
    \end{itemize}
\end{definition}

Note that property (C1) immediately implies that for any $\tilde{S}\subseteq U$ with $|\tilde{S}|\ge \beta|U|$, it requires at least $\frac{\beta}{(1+\varepsilon)\alpha}k$ sets in $\Scal$ to cover $\tilde{S}$, hence producing our desired gap.
We now present the improved reduction as follows. 
\begin{lemma}\label{lem:cover_group}
    There exists a reduction that takes as input an integer $k\in\N$, an $([m],k,\alpha,\varepsilon)$-cover family $\Scal\subseteq 2^{[m]}$, and a $\gapmldq$ instance $(\vec M,\vec u,\alpha m)$ where $\vec M\in \mathbb{F}_q^{d\times m}$, $\vec{u} \in \mathbb{F}_q^d$. The reduction outputs a new matrix $\vec{M}_k\in\F_q^{d\times m'}$ in $(m')^{O(1)}$ time, which satisfies the following properties:
    \begin{itemize}
        \item \textbf{Size:} $m'= |\Scal|\cdot q^{\frac{(1+\varepsilon)\alpha m}{k}}$.
        
        \item \textbf{Completeness:}
        If there exists $\vec{x}\in\F_q^{m}$ with $\|\vec x\|_0\le\alpha m$ such that $\vec M\vec x=\vec{u}$, then there exists $\vec y\in\F_q^{m'}$ with $\|\vec y\|_0\le k$ such that $\vec M_k\vec y=\vec{u}$.
        
        \item \textbf{Soundness:}
        If there is no $\vec{x}\in\F_q^{m}$ with $\|\vec x\|_0\le\gamma\alpha m$ such that $\vec M\vec x=\vec{u}$, then there is no $\vec{y}\in\F_q^{m'}$ with $\|\vec y\|_0\le\frac{\gamma}{1+\varepsilon} k$ such that $\vec M_k\vec y=\vec{u}$.
    \end{itemize}
\end{lemma}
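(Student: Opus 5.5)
We mirror the proof of \cref{lem:naive_group}, replacing the set of all low-weight vectors by the vectors supported on members of the cover family. Define
\[
A:=\{\vec\alpha\in\F_q^m:\supp(\vec\alpha)\subseteq S \text{ for some } S\in\Scal\}.
\]
Let $\vec M_k$ have one column per $\vec\alpha\in A$, with $\vec M_k[\vec\alpha]:=\vec M\vec\alpha$.

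\textbf{Size and running time.} By (C1), each $S\in\Scal$ has $|S|\le\frac{(1+\varepsilon)\alpha m}{k}$. Hence the number of vectors supported inside $S$ is at most $q^{(1+\varepsilon)\alpha m/k}$. Taking a union over $\Scal$ gives $|A|\le|\Scal|\cdot q^{(1+\varepsilon)\alpha m/k}$, and we may pad with zero columns (or index pairs $(S,\vec\alpha)$ with repetitions) to get exactly $m'$. Enumerating all $(S,\vec\alpha)$ and computing $\vec M\vec\alpha$ takes $\mathrm{poly}(m')$ time. The only mild point is that the input size must be polynomial in $m'$; this holds since $m'\ge|\Scal|$ and $m'\ge q^{(1+\varepsilon)\alpha m/k}$.

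\textbf{Completeness.} Let $\vec x$ satisfy $\|\vec x\|_0\le\alpha m$ and $\vec M\vec x=\vec u$, and set $\tilde S:=\supp(\vec x)$. By (C2) there are pairwise disjoint $T_1,\dots,T_k\in\Scal$ with $\bigcup_i T_i=\tilde S$. Let $\vec x_i$ be $\vec x$ restricted to $T_i$, i.e., zero outside $T_i$. Then $\vec x_i\in A$, and by disjointness $\sum_i\vec x_i=\vec x$. If some $\vec x_i$ coincide (for instance, several are zero), merge them: take $\vec y:=\sum_i \vec e_{\vec x_i}$, adding multiplicities, and drop the zero vector, which contributes nothing. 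Then $\|\vec y\|_0\le k$ and $\vec M_k\vec y=\sum_i\vec M\vec x_i=\vec u$.

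\textbf{Soundness.} We argue by contrapositive. Suppose $\vec y$ has $X:=\supp(\vec y)$ with $|X|\le\frac{\gamma}{1+\varepsilon}k$ and $\vec M_k\vec y=\vec u$. Let $\vec x:=\sum_{\vec\alpha\in X}\vec y[\vec\alpha]\,\vec\alpha$. Then $\vec M\vec x=\vec u$ and
\[
\|\vec x\|_0\le\sum_{\vec\alpha\in X}\|\vec\alpha\|_0\le|X|\cdot\frac{(1+\varepsilon)\alpha m}{k}\le\gamma\alpha m,
\]
which contradicts the NO hypothesis.

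None of these steps is a real obstacle. The substantive difficulty lies in constructing small cover families, which is deferred to \cref{sec:constr}. The only care needed here is in completeness: the $T_i$ obtained from (C2) may yield repeated or zero vectors, and the merging step above handles this so that $\|\vec y\|_0\le k$ is preserved.
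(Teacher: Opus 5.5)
Your proof is correct and follows the paper's argument: the same columns $\vec M\vec\alpha$ indexed by vectors supported on members of $\Scal$, the same projection of $\vec x$ onto the disjoint $T_i$ from (C2) for completeness, and the same support count via (C1) for soundness. The differences are cosmetic. You allow supports \emph{contained in} a member of $\Scal$ rather than equal to one; the paper does not need this, since $T_i\subseteq\supp(\vec x)$ forces $\supp(\vec x_i)=T_i$. You also explicitly handle coinciding zero vectors and padding to exactly $m'$ columns, points the paper passes over.
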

\begin{proof}
    Let $A:=\{\vec \alpha\in\F_q^m: \supp(\vec\alpha)\in\Scal\}$ and $m':=|A|$.
    We construct the new matrix $\vec M_k\in\F_q^{d\times m'}$ as follows.
    We establish a bijection between the column indices of $\vec M_k$ and $A$, and use $\vec\alpha\in A$ to denote a column index of $\vec M_k$.
    For each $\vec\alpha\in A$, we let the $\vec\alpha$-th column of $\vec M_k$ be
    $$
        \vec M_k[\vec\alpha] = \vec M\vec \alpha \in\F_q^d.
    $$
    The property (C1) of $\Scal$ guarantees that $|S|\le\frac{(1+\varepsilon)\alpha m}{k}$, hence the size of $\vec M_k$ satisfies
    $$
        m' = \sum_{S\in\Scal}(q-1)^{|S|} < |\Scal|\cdot q^{\frac{(1+\varepsilon)\alpha m}{k}}.
    $$
    
    \subparagraph{Completeness.}
    Assume that there exists $\vec{x}\in\F_q^{m}$ with $\|\vec x\|_0\le\alpha m$ such that $\vec M\vec x=\vec{u}$.
    Since $|\supp(\vec x)|\le\alpha m$, by the property (C2) of $\Scal$, there exists $T_1,\dots,T_k\in\Scal$ as a partition of $\supp(\vec x)$.

    For each $i\in[k]$, let $\vec x_i\in\F_q^m$ be the projection of $\vec{x}$ onto the coordinates in $T_i$, i.e., for every $j\in[m]$, we let
    $$
        \vec x_i[j]=\begin{cases}
            \vec x[j], & j\in T_i, \\
            0, & j\not\in T_i.
        \end{cases}
    $$
    Then $\sum_{i=1}^k\vec x_i=\vec x$, and $\vec x_i\in A$ because $\supp(\vec x)=T_i\in\Scal$.
    Let $\vec y\in\F_q^{m'}$ be such that $\vec y[\vec x_i]=1$ for every $i\in[k]$, and $\vec y[\vec\alpha]=0$ for any other $\vec \alpha\in A$.
    By construction, we have
    $$
        \vec M_k\vec y
        =\sum_{i=1}^k \vec M_k[\vec x_i]
        =\sum_{i=1}^k \vec M\vec x_i
        =\vec M\vec x=\vec u.
    $$
        
    \subparagraph{Soundness.}
    Assume for the sake of contradiction that there exists $\vec{y}\in\F_q^{m'}$ with $\|\vec y\|_0\le\frac{\gamma}{1+\varepsilon} k$ such that $\vec M_k\vec y=\vec{u}$.
    Let $X:=\supp(\vec y)$, then we have
    $$
        \vec u=\vec M_k\vec y=\sum_{\vec \alpha\in X}\vec M_k[\vec\alpha]\cdot \vec y[\vec\alpha] = \sum_{\vec\alpha\in X}\vec M\vec\alpha\cdot \vec y[\vec\alpha].
    $$
    Let $\vec x:=\sum_{\vec\alpha\in X}\vec y[\vec\alpha]\cdot \vec\alpha$, then $\vec M\vec x=\vec u$.
    By the property (C1) of $\Scal$, we have $|\supp(\vec\alpha)|\le\frac{(1+\varepsilon)\alpha}{k}m$ for every $\vec\alpha\in A$.
    Since $|X|=\|\vec y\|_0\le \frac{\gamma}{1+\varepsilon} k$,
    we have
    $$
        |\supp(\vec x)|\le\big|\bigcup_{\vec\alpha\in X}\supp(\vec\alpha)\big|\le |X|\cdot\frac{(1+\varepsilon)\alpha}{k}m \le \gamma\alpha m.
    $$
    Hence $\vec x$ satisfies $\vec M\vec x=\vec u$ and $\|\vec x\|_0\le\gamma\alpha m$, proving the soundness.
\end{proof}

Combining \cref{lem:maxlin2mld} and \cref{lem:cover_group}, we obtain a reduction which on input an $([m],k,1-c,\varepsilon)$-cover family $\Scal$, it reduces a $\gapmaxlin$ instance with $m$ equations to $\gapkmldq$ instance with size $N=|\Scal|\cdot 2^{O(m/k)}$ for $\gamma=\frac{1-s}{(1-c)(1+\varepsilon)}$.
If $|\Scal|=2^{O(m/k)}$, by \cref{cor:linEthHard} we get that assuming $\ETH $, there exists a constant $\gamma > 1$ such that no algorithm running in time $ N^{o(k)} $ can solve $\gapkmldq$.

To achieve the tight lower bound, we will construct a cover family over universe $[m]$ with size $|\Scal|= 2^{O(m/k)}$ using the following lemma, whose proof is postponed to  \cref{sec:constr}. 

\begin{lemma}
\label{lem:partition_exist}
    For every $m,k\in\N$ and $0<\alpha,\varepsilon<1$ such that $m\ge k^{4k^2/(\varepsilon^2\alpha)}$, there exists an $([m],k,\alpha,\varepsilon)$-cover family $\Scal$ with $|\Scal|=\lceil\log_k m\rceil\cdot k\cdot 2^{2m/k}$ that can be constructed deterministically in $|\Scal|^{O(1)}$ time.
\end{lemma}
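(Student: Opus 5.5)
The plan is to build $\Scal$ from a small family of balanced partitions of $[m]$. Call a partition $P=(B_1,\dots,B_k)$ of $[m]$ \emph{good} for a set $\tilde S\subseteq[m]$ if $|\tilde S\cap B_i|\le (1+\varepsilon)\alpha m/k$ for every $i$. Suppose we have a family $\mathcal P$ of partitions with three properties: $|\mathcal P|\le\lceil\log_k m\rceil$; every part of every partition has size at most $2m/k$; and every $\tilde S$ with $|\tilde S|\le\alpha m$ has a good partition in $\mathcal P$.

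Given such a family, set
\[
\Scal:=\{T : T\subseteq B \text{ for some part } B \text{ of some } P\in\mathcal P,\ |T|\le (1+\varepsilon)\alpha m/k\},
\]
with $\emptyset\in\Scal$. Property (C1) holds by definition. For (C2), take a good $P$ for $\tilde S$ and use the pieces $T_i=\tilde S\cap B_i$. These are pairwise disjoint and their union is $\tilde S$; empty pieces are allowed. The size bound is
\[
|\Scal|\le|\mathcal P|\cdot k\cdot 2^{2m/k}.
\]
The construction is clearly polynomial in $|\Scal|$. It therefore remains to build $\mathcal P$.

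Two reductions simplify the goodness requirement. First, enlarging $\tilde S$ only enlarges its pieces, so we may assume $|\tilde S|=\lfloor\alpha m\rfloor$. Second, goodness then means that, for $X$ uniform on $\tilde S$, every part has probability at most about $(1+\varepsilon)/k$.

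For the construction, let $L=\lceil\log_k m\rceil$ and fix an injection $\varphi:[m]\to[k]^L$. Let $P_j$ be the partition of $[m]$ according to the $j$-th coordinate of $\varphi$, for $j=1,\dots,L$. The key step is an entropy argument. Take $X$ uniform on $\tilde S$ and write $X_j$ for the $j$-th coordinate of $\varphi(X)$. By subadditivity,
\[
\sum_j H(X_j)\ge \log|\tilde S|\ge L\log k-\log k-\log(1/\alpha)-O(1).
\]
The total entropy deficiency $\sum_j(\log k-H(X_j))$ is therefore at most $\log k+\log(1/\alpha)+O(1)$, so some coordinate $j$ has deficiency at most $(\log k+\log(1/\alpha)+O(1))/L$. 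That deficiency equals $D(X_j\,\|\,\mathrm{Unif}[k])$. By Pinsker's inequality, every value of $X_j$ then has probability at most
\[
\frac1k+\sqrt{\frac{\log k+\log(1/\alpha)+O(1)}{2L}}.
\]
For this to be at most $(1+\varepsilon)/k$, we need $L\gtrsim k^2(\log k+\log(1/\alpha))/\varepsilon^2$. The hypothesis $m\ge k^{4k^2/(\varepsilon^2\alpha)}$ gives $L\ge 4k^2/(\varepsilon^2\alpha)$, and this suffices since $\log(1/\alpha)\le 1/\alpha$. The lost constant factors are absorbed by using $(1+\varepsilon/2)$ in the intermediate bound, or by the slack in the constant $4$. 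Hence $P_j$ is good for $\tilde S$.

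The step I expect to be the main obstacle is the part-size requirement $\le 2m/k$ when $m$ is not a power of $k$. The naive base-$k$ digit map leaves the top coordinate badly unbalanced. My first attempt would be to choose $\varphi$ so that every coordinate is balanced. Concretely, write $x\in[m]$ in base $k$ as $(d_1,\dots,d_L)$ and set $\varphi(x)_j:=d_j+d_{1}+\dots$ (a cyclic shift by lower-order digits mod $k$). This map is still injective, and each coordinate class becomes a union of full residue classes of consecutive blocks, which gives sizes at most $\lceil m/k\rceil+k^{j-1}$-type bounds. If that fails, a fallback is to discard the few badly unbalanced coordinates. Alternatively, one can replace $\varphi$ by a uniformly spread injection whose image in each coordinate is balanced up to an additive $k^{L-1}\le m$. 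The entropy argument above is insensitive to which injection is used, so any balanced choice works.
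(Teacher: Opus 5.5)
\textbf{The Pinsker step loses a factor of $k$.} Your architecture matches the paper's: the cover family consists of small subsets of buckets of a few partitions, the partitions are coordinate projections of an injection into $[k]^L$, and subadditivity of entropy produces a balancing coordinate. The gap is quantitative. Your bound on the total deficiency $\sum_j(\ln k-H(X_j))$ is $\ln k+\ln(1/\alpha)+O(1)$. The $\ln k$ term is genuinely there, because $m$ may be only slightly above $k^{L-1}$, so the image of $[m]$ can have density about $1/k$ in $[k]^L$. Pinsker turns ``some value of $X_j$ has probability $\rho>(1+\varepsilon)/k$'' into a deficiency of only $2\varepsilon^2/k^2$. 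You therefore need $2\varepsilon^2L/k^2>\ln k+\ln(1/\alpha)+O(1)$, but the hypothesis only guarantees $2\varepsilon^2L/k^2\ge 8/\alpha$. This fails once $\ln k$ exceeds roughly $8/\alpha$. Since $\ln k$ is not a constant, neither $\ln(1/\alpha)\le 1/\alpha$ nor the slack in the constant $4$ absorbs it. As written, your argument proves the lemma only under a hypothesis like $m\ge k^{Ck^2\ln(k/\alpha)/\varepsilon^2}$.

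\textbf{The fix.} What is missing is a per-coordinate deficiency of order $\varepsilon^2/k$ rather than $\varepsilon^2/k^2$. Pass to the indicator of the heavy value. By data processing, $\ln k-H(X_j)\ge D(\mathrm{Bern}(\rho)\,\|\,\mathrm{Bern}(1/k))=\ln k-h(\rho)$, where $h(x)=x\ln\frac1x+(1-x)\ln\frac{k-1}{1-x}$. The paper bounds this by Taylor-expanding $h$ at $1/k$. Using $h'(1/k)=0$, $h''(\xi)=-1/(\xi(1-\xi))$ and $\xi<\rho$, it gets $\ln k-h(\rho)>\varepsilon^2/(4k)$. With this you only need $\varepsilon^2L/(4k)$ to exceed $\ln k+\ln(1/\alpha)+O(1)$. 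The hypothesis gives $\varepsilon^2L/(4k)\ge k/\alpha\ge\ln k+\ln(1/\alpha)+1$, which comfortably covers the rounding term.

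\textbf{The part-size step is also unresolved.} Your cyclic-shift map is not specified precisely. A bound of the form $\lceil m/k\rceil+k^{j-1}$ does not give $2m/k$ at $j=L$, since $k^{L-1}$ can be nearly $m$. ``Balanced up to an additive $k^{L-1}$'' is likewise too weak, because you need additive error at most $m/k$. Your fallback of discarding coordinates does work, but it needs two facts you did not state.
\begin{itemize}
\item With plain base-$k$ digits, only the top digit can be unbalanced. For $j\le L-1$, each class has at most $k^{j-1}\lceil m/k^j\rceil\le m/k+k^{L-2}<2m/k$ elements.
\item Dropping coordinates costs nothing in the averaging. For any set $J$ of kept coordinates, $\sum_{j\in J}(\ln k-H(X_j))\le L\ln k-\ln|\tilde S|$, since $H(X_j)\le\ln k$ for the dropped ones.
\end{itemize}
The paper instead places $U$ inside the union of $c=\lceil m/k^{d-1}\rceil$ diagonal slices $\{x:\sum_i x_i\equiv\ell \bmod k\}$. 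Each coordinate class of a slice has exactly $k^{d-2}$ points, so every bucket has at most $ck^{d-2}\le 2(c-1)k^{d-2}<2m/k$ elements.
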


It follows that
$$
    N =  2^{O_q(m/k)}.
$$
This improves upon the naive grouping bound of $2^{O(m \log k / k)}$ by removing the $\log k$ factor in the exponent. Consequently, under $\ETH$, this reduction implies that $\gapkmldq$ cannot be solved in time $N^{o(k)}$, confirming the tightness of the lower bound for the parameterized problem.

Note that \cref{lem:partition_exist} requires $m\ge k^{\Omega_{\alpha,\varepsilon}(k^2)}$.
To obtain the tight lower bound for larger $k$, we give a randomized construction of cover families.

\begin{lemma}
\label{lem:partition_random}
    For every $m,k\in\N$, $0<\alpha,\varepsilon<1$ such that $m \ge \frac{6k\ln 2k}{\varepsilon^2\alpha}$, there exists an $([m],k,\alpha,\varepsilon)$-cover family $\Scal$ with $|\Scal|=\frac{12k^2}{\varepsilon^2\alpha}\cdot 2^{(1+\varepsilon)m/k}$ that can be constructed in $|\Scal|^{O(1)}$ time with probability $1-o(1)$.
\end{lemma}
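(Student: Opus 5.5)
The plan is to build $\Scal$ from a \emph{balanced partition family}: a collection $\mathcal{P}$ of partitions of $[m]$ into $k$ parts such that every set $\tilde S\subseteq[m]$ with $|\tilde S|\le\alpha m$ is split by some $P\in\mathcal{P}$ into $k$ pieces, each of size at most $(1+\varepsilon)\alpha m/k$. Given such $\mathcal{P}$, define $\Scal$ to be all subsets $T$ of a part $B$ of some $P\in\mathcal{P}$ with $|T|\le (1+\varepsilon)\alpha m/k$. Property (C1) then holds by definition. For (C2), take the good partition $P=(B_1,\dots,B_k)$ for $\tilde S$ and set $T_i:=\tilde S\cap B_i$; these are pairwise disjoint, their union is $\tilde S$, and each lies in $\Scal$ (empty pieces are allowed; if needed, use the empty set as a member). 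It is convenient to pad $\tilde S$ up to size exactly $\lfloor\alpha m\rfloor$, since sets that are subsets of balanced pieces stay balanced. The size bound is $|\Scal|\le |\mathcal{P}|\cdot k\cdot 2^{\max_B|B|}$. So if every part has size at most $(1+\varepsilon)m/k$, we get $|\Scal|\le|\mathcal{P}|\, k\, 2^{(1+\varepsilon)m/k}$. This explains the target $|\Scal|=\frac{12k^2}{\varepsilon^2\alpha}2^{(1+\varepsilon)m/k}$ with $|\mathcal{P}|=\frac{12k}{\varepsilon^2\alpha}$.

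To obtain $\mathcal{P}$, sample $t=\frac{12k}{\varepsilon^2\alpha}$ independent uniformly random partitions, assigning each element of $[m]$ to one of $k$ blocks independently and uniformly. Part balance follows from Chernoff: each block has mean size $m/k$, and the probability that it exceeds $(1+\varepsilon)m/k$ is at most $e^{-\varepsilon^2 m/(3k)}$. A union bound over $tk$ blocks makes this $o(1)$ under $m\ge 6k\ln(2k)/(\varepsilon^2\alpha)$, possibly after a mild reading of the $o(1)$. Alternatively, truncate oversized blocks by redistributing elements deterministically, which keeps the equipartition property only approximately; I would prefer the union bound.

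The main obstacle is the equipartition property, which must hold for \emph{all} $\binom{m}{\alpha m}$ sets $\tilde S$ simultaneously. Fix $\tilde S$ of size $\alpha m$. For a single random partition, Chernoff plus a union bound over the $k$ blocks gives that $\tilde S$ fails with probability at most $k e^{-\varepsilon^2\alpha m/(3k)}\le 1/2$ once $\alpha m\ge 3k\ln(2k)/\varepsilon^2$; this is where the hypothesis on $m$ enters. With $t$ independent partitions, $\tilde S$ is bad for all of them with probability at most $2^{-t}$. That is far too weak to union-bound over $\binom{m}{\alpha m}\approx 2^{H(\alpha)m}$ sets with only $t=O(k)$ partitions.

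So the naive union bound fails, and I would change the per-set failure bound to be exponential in $m$. One option is to require a failure probability of $k e^{-\varepsilon^2\alpha m/(3k)}$ per partition and take $t$ such that $t\cdot\varepsilon^2\alpha m/(3k)\gtrsim H(\alpha)m$, i.e.\ all $t$ partitions fail. That gives $(ke^{-\varepsilon^2\alpha m/(3k)})^{t}\le 2^{-2H(\alpha) m}$ when $t=\Theta(k\log(1/\alpha)/(\varepsilon^2\alpha))$. This matches the shape $\frac{12k}{\varepsilon^2\alpha}$ up to constants, because $H(\alpha)\le\alpha\log(e/\alpha)$, and the $\alpha$ in the Chernoff exponent cancels most of it. I would try to tune the Chernoff constants, using the multiplicative form with $\varepsilon\le1$, so that $t=12k/(\varepsilon^2\alpha)$ suffices. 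If the $\log(1/\alpha)$ factor refuses to disappear, the fallback is to accept a slightly larger constant in $|\mathcal{P}|$, which does not affect the $2^{O(m/k)}$ conclusion. The construction clearly runs in $|\Scal|^{O(1)}$ time.
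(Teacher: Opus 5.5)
Your plan follows the paper's proof. You build a balanced partition family and turn it into a cover family by taking all subsets of size at most $(1+\varepsilon)\alpha m/k$ of each block. You sample $t=\lceil 12k/(\varepsilon^2\alpha)\rceil$ independent uniform functions, bound the per-set failure by a product over the $t$ samples, and union-bound over all $S$ of size $\alpha m$.

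The step you leave open closes at once, with no entropy refinement. You only need $\binom{m}{\alpha m}\le 2^m<e^m$ and $\ln k<\ln(2k)\le \varepsilon^2\alpha m/(6k)$, which is exactly the hypothesis on $m$. Then
\[
e^m\Bigl(k\,e^{-\varepsilon^2\alpha m/(3k)}\Bigr)^{t}
=\exp\Bigl(m-t\Bigl(\frac{\varepsilon^2\alpha m}{3k}-\ln k\Bigr)\Bigr)
\le \exp\Bigl(m-t\cdot\frac{\varepsilon^2\alpha m}{6k}\Bigr)\le e^{-m},
\]
since $t\cdot\varepsilon^2\alpha m/(6k)\ge 2m$. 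So there is no $\log(1/\alpha)$ factor to remove, no fallback constant is needed, and the stated size is achieved exactly. Your own estimate $H(\alpha)\le\alpha\log(e/\alpha)$ would in fact give $t=\Theta(k\log(e/\alpha)/\varepsilon^2)$, not $\Theta(k\log(1/\alpha)/(\varepsilon^2\alpha))$. That is already $O(k/(\varepsilon^2\alpha))$, because $\alpha\ln(e/\alpha)\le 1$.

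The one real difference from the paper is how you get (P1). You require all $t$ functions to be balanced, via a union bound over $tk$ blocks. Using the hypothesis, this fails with probability at most
\[
\frac{12k^2}{\varepsilon^2\alpha}e^{-\varepsilon^2 m/(3k)}\le \frac{3}{\varepsilon^2\alpha}e^{-(1-\alpha)\varepsilon^2 m/(3k)}.
\]
This is $o(1)$ as $m\to\infty$ for fixed $\alpha<1$ and $\varepsilon$, so it suffices for the lemma as stated. However, the bound is vacuous near the threshold on $m$ when $\alpha$ is close to $1$.

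The paper avoids this by discarding unbalanced functions and folding that into the per-set bad event. Index $i$ is bad for $S$ if $f_i$ violates (P1) or fails to balance $S$. Since $\alpha<1$, this event has probability below $2k\,e^{-\varepsilon^2\alpha m/(3k)}$, and it is still independent across $i$. The same computation, with $\ln(2k)$ in place of $\ln k$ (which is why the hypothesis carries $\ln 2k$), gives failure probability $e^{-m}$ uniformly for all admissible parameters. This is cleaner than either of your two options for (P1) and needs no separate union bound.
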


\subsection{Putting Everything Together}
We now combine everything together to prove the main result and derive other conclusions.

\begin{theorem}[\cref{thm: mld_main} Restated] \label{thm:mld_main_restate}
    Assuming $\ETH$, there exist constants $\gamma > 1$ and $\delta>0$, such that  for any algorithm that takes as input a matrix $\vec{H}\in \F_q^{d \times N}$, a vector $\vec{u}\in\F_q^{d}$ and a parameter $k\in\N$, it must take $ N^{\delta k}$ time to distinguish the following two cases:
    \begin{itemize}
        \item $\yes$ Case: There exists $\vec x\in\F_q^{N}$ with $\|\vec x\|_0\le k$ such that $\vec H\vec x=\vec u$.
        \item $\no$ Case: For every $\vec x\in\F_q^{N}$ with $\|\vec x\|_0\le \gamma k$, we have $\vec H\vec x\neq\vec u$.
    \end{itemize}
\end{theorem}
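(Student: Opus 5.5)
We argue by contradiction: assume an algorithm solves $\gapkmldq$ (with the $\gamma$ fixed below) in time $N^{\delta k}$ for every $\delta>0$, and turn it into a $2^{o(n)}$-time algorithm for $\gapmaxlinq$, contradicting \cref{cor:linEthHard}. Fix the constants $0<s<c<1$ and $C$ from \cref{cor:linEthHard}. Set $\alpha:=1-c$ and choose $\varepsilon>0$ small enough that
\[
\gamma:=\frac{1-s}{(1-c)(1+\varepsilon)}>1 .
\]

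Given a $\gapmaxlinq$ instance with $n$ variables and $m\le Cn$ equations, the reduction proceeds in three steps.
\begin{enumerate}
\item Apply \cref{lem:maxlin2mld} to obtain a $\gapmldq[\frac{1-s}{1-c}]$ instance $(\vec H,\vec u,\alpha m)$ with $\vec H\in\F_q^{d\times m}$.
\item Pick a parameter $k=k(n)$ that grows slowly enough to satisfy $m\ge k^{4k^2/(\varepsilon^2\alpha)}$. For instance, take $k$ to be the largest integer with $k^{4k^2/(\varepsilon^2\alpha)}\le m$; this gives $k=\Theta(\sqrt{\log m/\log\log m})\to\infty$.
\item Use \cref{lem:partition_exist} to build deterministically an $([m],k,\alpha,\varepsilon)$-cover family $\Scal$ of size $\lceil\log_k m\rceil k 2^{2m/k}$. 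Feed it to \cref{lem:cover_group} to obtain a $\gapkmldq$ instance $(\vec M_k,\vec u)$ with gap $\gamma$ and
\[
N=|\Scal|\,q^{(1+\varepsilon)\alpha m/k}=2^{O_q(m/k)} .
\]
\end{enumerate}
The completeness and soundness statements of the two lemmas compose directly: a yes-instance gives a solution of weight at most $k$, and a no-instance excludes solutions of weight at most $\frac{\gamma}{1+\varepsilon}\cdot(1+\varepsilon)k$. The gap parameter of the composed instance is exactly the $\gamma$ defined above, and it satisfies $\gamma>1$.

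Next we account for the running time. The construction costs $N^{O(1)}=2^{O(m/k)}$, which is $2^{o(n)}$ because $k\to\infty$. Solving the instance costs $N^{\delta k}=2^{\delta\cdot O_q(m/k)\cdot k}=2^{\delta C' n}$, where $C'$ depends only on $q,C,c,\varepsilon$. Let $\delta_0$ be the ETH constant from \cref{cor:linEthHard}, and choose $\delta<\delta_0/C'$ (keeping some slack). The total running time is then below $2^{\delta_0 n}$ for large $n$, a contradiction. Hence some fixed $\delta>0$ works for infinitely many $k$ (indeed for all $k$ along this sequence), which is what the theorem asserts.

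The main obstacle is the constant bookkeeping. The exponent $O(m/k)$ must have a hidden constant independent of $k$: it comes from $2^{2m/k}$ and $q^{(1+\varepsilon)\alpha m/k}$, while the factors $\lceil\log_k m\rceil k$ are only polynomial in $m$ and are absorbed. Only then does $N^{\delta k}$ become $2^{O(\delta n)}$ uniformly. A second point is that the hardness is only established for the slowly growing $k$ allowed by \cref{lem:partition_exist}. To handle an arbitrary fixed $k$, we can pad: given a hard instance for parameter $k_0\ge k$, we relate the problems through the same scaling.

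The theorem as stated, however, only needs a family of parameters on which the $N^{\delta k}$ bound holds. Interpreting it as a uniform statement, "no algorithm runs in time $N^{\delta k}$ for all inputs", the slowly growing $k(m)$ suffices. The remaining check is that $\alpha m$ and $k$ are integral, or that the rounding costs only a $(1+o(1))$ factor, which is absorbed by shrinking $\varepsilon$.
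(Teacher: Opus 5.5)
Your proposal is correct and follows the paper's route: \cref{lem:maxlin2mld}, then the deterministic cover family of \cref{lem:partition_exist} fed into \cref{lem:cover_group}, and finally the observation that $\log N = O_q(m/k)$ with a hidden constant independent of $k$, so that $N^{\delta k}=2^{\delta\cdot O_q(m)+o(m)}$. Your explicit choice $k=\Theta(\sqrt{\log m/\log\log m})$ does two things the paper's proof leaves implicit: it guarantees the hypothesis $m\ge k^{4k^2/(\varepsilon^2\alpha)}$, and it keeps the $N^{O(1)}$ reduction cost at $2^{o(n)}$. The padding remark for arbitrary fixed $k$ is unsupported, but, as you say yourself, the argument does not need it.
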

\begin{proof}
    We start with a $\gapmaxlinq$ instance with $n$ variables and $m\le Cn$ equations for some constant $0<s<c<1$ and $C>0$. By \Cref{cor:linEthHard}, there exists some constant $\delta>0$ such that deciding $\gapmaxlinq$ must take $2^{\delta n}$ time assuming $\ETH$.

    We now invoke \cref{lem:partition_exist} with $\alpha\leftarrow 1-c$, and a sufficiently small $\varepsilon>0$ such that $\gamma := \frac{1-s}{(1+\varepsilon)(1-c)} > 1$. This gives us a $\big([m],k,1-c,\varepsilon\big)$-cover family $\Scal$ with $|\Scal| = k\cdot 2^{O(m/k)}$, constructible in $|\Scal|^{O(1)}$ time.
    Using \cref{lem:maxlin2mld} and \cref{lem:cover_group}, we reduce this $\gapmaxlinq$ instance to a $\gapkmldq$ instance where the number of vectors is $N = |\Scal| \cdot q^{O(m/k)} = k\cdot  2^{O_q(n/k)}$.

    Assume that $\gapkmldq$ can be solved in $N^{\delta k}$ time for every $\delta>0$,
    then for any $\delta'>0$, there exists $\delta$ such that $\gapmaxlinq$ can be solved in time
    $\big(k\cdot 2^{O_q(n/k)}\big)^{\delta k}<2^{\delta' n}$.
    Therefore, there exist $\gamma > 1$ and $\delta>0$ such that assuming $\ETH$, any algorithm that decides $\gapkmldq$ must take time $N^{\delta k}$.
\end{proof}

In fact, \Cref{thm:mld_main_restate} also holds for larger $k$ such that $k=O((\log\log N)^{0.49})$, since we can construct the $\big([m],k,1-c,\varepsilon\big)$-cover family as long as $k^{\Omega(k^2)}\le m$. Following this observation, if we use the randomized construction of cover family instead of the deterministic one, we obtain the lower bound for any $k\le N^\varepsilon$ for some constant $\varepsilon>0$ under randomized $\ETH$.

\begin{theorem}[\cref{thm: mld_main_2} Restated] 
    Assuming randomized $\ETH$, there exist constants $\gamma > 1$ and $\delta,\varepsilon'>0$, such that for any algorithm that takes as input a matrix $\vec{H}\in \F_q^{d \times N}$, a vector $\vec{u}\in\F_q^{d}$ and an integer $k\in\N$ such that $2\le k\le N^{\varepsilon'}$, it must take $N^{\delta k}$ time to distinguish the following two cases:
    \begin{itemize}
        \item $\yes$ Case: There exists $\vec x\in\F_q^{d}$ with $\|\vec x\|_0\le k$ such that $\vec H\vec x=\vec u$.
        \item $\no$ Case: For every $\vec x\in\F_q^{d}$ with $\|\vec x\|_0\le \gamma k$, we have $\vec H\vec x\neq\vec u$.
    \end{itemize}
\end{theorem}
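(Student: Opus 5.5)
The plan is to follow the proof of \cref{thm:mld_main_restate}, replacing the deterministic cover family of \cref{lem:partition_exist} by the randomized one of \cref{lem:partition_random}. The only new work is checking the parameter constraints that make $k$ as large as $N^{\varepsilon'}$ admissible.

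Start from a $\gapmaxlinq$ instance with $n$ variables and $m\le Cn$ equations, as in \cref{cor:linEthHard}. Under randomized $\ETH$, the chain of \cref{lemma:sp}, Tovey's reduction and \cref{thm:satToLinRed} is deterministic, so $\gapmaxlinq$ still needs $2^{\delta n}$ time even for randomized algorithms with bounded error. Apply \cref{lem:maxlin2mld} to get a $\gapmldq$ instance $(\vec H,\vec u,(1-c)m)$ with gap $\frac{1-s}{1-c}$. Fix a small constant $\varepsilon$ so that $\gamma:=\frac{1-s}{(1+\varepsilon)(1-c)}>1$.

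Given $k$, we may assume $m\ge \frac{6k\ln 2k}{\varepsilon^2(1-c)}$; otherwise $k=\Omega(m/\log m)$ and the instance can be solved directly in $2^{O(m)}$ time, which is within the budget. Invoke \cref{lem:partition_random} with $\alpha=1-c$. With probability $1-o(1)$ it outputs a $([m],k,1-c,\varepsilon)$-cover family of size $O(k^2)\cdot 2^{(1+\varepsilon)m/k}$. Properties (C1) and (C2) can be checked, or we can simply accept the $o(1)$ failure probability, which only weakens completeness and so still yields a bounded-error randomized reduction. \cref{lem:cover_group} then produces a $\gapkmldq$ instance with
\[
N=O(k^2)\cdot 2^{O_q(m/k)}
\]
columns and gap $\gamma$. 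Completeness and soundness follow exactly as in \cref{lem:cover_group}.

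For the running time, suppose there is a randomized algorithm deciding $\gapkmldq$ in time $N^{\delta k}$ for arbitrarily small $\delta$. The total time is then $\mathrm{poly}(N)+\big(O(k^2)2^{O(m/k)}\big)^{\delta k}=2^{O(m/k)}+k^{O(\delta k)}2^{O(\delta m)}$. The main obstacle is choosing $k$ so that this is $2^{o(n)}$ while respecting $k\le N^{\varepsilon'}$.

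I would not allow an arbitrary $k$. Instead, for each target $k$ the reduction must produce instances in which $k$ lies in the claimed range. Since $N\ge 2^{m/k}$, the condition $k\le N^{\varepsilon'}$ holds whenever $k\log k\lesssim \varepsilon' m/k$, that is, when $k\lesssim\sqrt{\varepsilon' m}$ up to logarithmic factors. In that regime $k^{O(\delta k)}=2^{O(\delta k\log k)}=2^{o(m)}$, so the total time is $2^{O(\delta m)}+2^{O(m/k)}$. For $k\ge2$ this is at most $2^{\delta' n}$ for a suitable choice of $\delta$ relative to the $2^{\delta n}$ bound from \cref{cor:linEthHard}. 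The reduction's own $2^{O(m/k)}$ cost needs care when $k$ is small, but that cost is counted as part of $N^{\delta k}$ in the same way as in the deterministic proof.

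Conversely, given any $N$ and $k$ with $2\le k\le N^{\varepsilon'}$, the hard instances for that $k$ come from choosing $m\approx k\log N$. The constant $\varepsilon'$ is chosen small enough that the lower bound $m\ge \Omega(k\log k)$ required by \cref{lem:partition_random} holds. If this bookkeeping turns out to be delicate, a fallback is to pad $N$ with dummy zero columns; this does not affect completeness or soundness and lets $N$ be tuned to the given $k$.
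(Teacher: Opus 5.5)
Your route is the paper's: dualize the $\gapmaxlinq$ instance via \cref{lem:maxlin2mld}, feed the randomized cover family of \cref{lem:partition_random} into \cref{lem:cover_group}, and check that $k\le N^{\varepsilon'}$ is compatible with the hypothesis $m\ge\frac{6k\ln 2k}{\varepsilon^2\alpha}$. However, the step where you work out the admissible range of $k$ has an algebra error, and as written your running-time analysis does not cover most of the range you claim.

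Since $\log_2 N\approx m/k$, the condition $k\le N^{\varepsilon'}$ amounts to $\log_2 k\le\varepsilon' m/k$, i.e.\ $k\log_2 k\le\varepsilon' m$, not $k\log k\le\varepsilon' m/k$. Your restriction $k\le\sqrt{\varepsilon' m}$ (up to logs) is therefore far stronger than necessary. With $m\approx k\log N$ as in your last paragraph, it only allows $k$ up to roughly $\varepsilon'\log N$, whereas $k$ near $N^{\varepsilon'}$ gives $k\gg\sqrt m$. So the instances you build in the final paragraph fall outside the regime in which you verified the time bound.

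The repair is that you never needed $k^{O(\delta k)}=2^{o(m)}$. From $k\log k=O(\varepsilon' m)$ you get $\bigl(O(k^2)\,2^{O(m/k)}\bigr)^{\delta k}=2^{O(\delta(m+k\log k))}=2^{O(\delta n)}$, which is below $2^{\delta' n}$ once $\delta$ is small. This is exactly the paper's computation. It sets $\beta=\frac{\varepsilon^2\alpha}{6(1+\varepsilon)(1+\alpha\log_2 q)}$ and $\varepsilon'=\frac{\beta}{2(1+\beta)}$, and derives $k\log_2(2k)\le\frac{\varepsilon^2\alpha}{6}m$ from $k\le(\frac{\varepsilon^2\alpha}{12}N)^{\varepsilon'}$. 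It then bounds the total time by $2^{\delta\cdot O_q(n+k\log k)}$.

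Second, your aside that when $m<\frac{6k\ln 2k}{\varepsilon^2(1-c)}$ the instance can be solved directly in $2^{O(m)}$ time ``within the budget'' is false. A $2^{O(n)}$-time algorithm for $\gapmaxlinq$ contradicts nothing, since brute force already achieves it. The contradiction with $\ETH$ needs time below $2^{\delta n}$ for the specific $\delta$ of \cref{cor:linEthHard}. Fortunately the case never arises. The inequality $k\log k=O(\varepsilon' m)$, with $\varepsilon'$ small compared to $\varepsilon^2\alpha$, already yields the hypothesis of \cref{lem:partition_random}, so the case split should be dropped rather than justified.

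Two smaller remarks. First, (C2) cannot be checked efficiently, since it quantifies over all $\binom{m}{\alpha m}$ subsets. So the one-sided-error option is the one to use, and it is sound because (C1) and the size bound hold deterministically. Second, the reduction's own $N^{O(1)}$ cost is not literally absorbed into $N^{\delta k}$ when $\delta k$ is small. The paper leaves this point implicit as well, so here you are not deviating from its argument.
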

\begin{proof}
    We still start with a $\gapmaxlinq[1-\alpha]$ instance with $n$ variables and $m\le Cn$ equations. Pick a small enough constant $\varepsilon>0$ such that $\gamma:=\frac{1-s}{(1+\varepsilon)\alpha}>1$.
    Let
    $$
        N:=\frac{12k^2}{\varepsilon^2\alpha}\cdot 2^{(1+\varepsilon)(1+\alpha\log_2 q)\frac{m}{k}}
        \quad\text{and}\quad
        \beta:=\frac{\varepsilon^2\alpha}{6(1+\varepsilon)(1+\alpha\log_2 q)}.
    $$
    When
    $$
        k\le \Big(\frac{\varepsilon^2\alpha}{12}N\Big)^{\frac{\beta}{2(1+\beta)}},
    $$
    which implies that
    $$
        k\ln 2k\le k\log_2(2k)\le\frac{\varepsilon^2\alpha}{6}m,
    $$
    by \cref{lem:partition_random}, we can construct a $\big([m],k,\alpha,\varepsilon\big)$-cover family $\Scal$ with size $|\Scal|=\frac{12k^2}{\varepsilon^2\alpha}\cdot 2^{(1+\varepsilon)m/k}$.

    Using the $\big([m],k,\alpha,\varepsilon\big)$-cover family, we can reduce the $\maxlin$ instance to a $\gapkmldq$ instance with size $N=|\Scal| \cdot q^{(1+\varepsilon)\alpha m/k}=k^2\cdot 2^{O_q(n/k)}$.
    Assume that $\gapkmldq$ can be solved in $N^{\delta k}$ time for every $\delta>0$, then for any $\delta'>0$, there exists $\delta$ such that $\gapmaxlinq$ can be solved in time
    $$
        \big(k^2\cdot 2^{O_q(n/k)}\big)^{\delta k}
        = 2^{\delta\cdot O_q(n+k\log k)}
        <2^{\delta' n}.
    $$
    This proves the $N^{\delta k}$-time lower bound for $k=O(N^{\varepsilon'})$ where $\varepsilon':=\frac{\beta}{2(1+\beta)}$.
\end{proof}

We state the following corollary which can be obtained by a standard reduction from $\gapkmldq$ to $\gapkncpq$ as in \cite{LLL24}.
\begin{corollary}
    Assuming $\ETH$, there exists a constant $\gamma>1$, such that no algorithm can decide $\gapkncpq$ instance $(\vec A,\vec t)$ where $\vec A\in F_q^{N \times M}, \vec t \in \F_q^{N}$  in $N^{o(k)}$ time. 
\end{corollary}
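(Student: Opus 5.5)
The plan is to compose the reduction of \cref{thm:mld_main_restate} with the standard reduction from $\gapkmldq$ to $\gapkncpq$ of \cite{LLL24}, and to check that this second step keeps both the parameter and the instance size under control.

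Given a $\gapkmldq$ instance $(\vec H,\vec u)$ with $\vec H\in\F_q^{d\times N}$, we may assume $\vec H$ has full row rank; otherwise we delete dependent rows, and if $\vec u\notin\operatorname{col}(\vec H)$ we output a trivial $\no$ instance. Compute a generator matrix $\vec A\in\F_q^{N\times M}$ for the kernel of $\vec H$, so that $\{\vec A\vec z\}=\ker\vec H$ and $M=N-d$. Using Gaussian elimination, find a particular solution $\vec t\in\F_q^N$ with $\vec H\vec t=\vec u$. Then
\[
\{\vec x:\vec H\vec x=\vec u\}=\{\vec t-\vec A\vec z:\vec z\in\F_q^M\}.
\]
Consequently $\min\{\|\vec x\|_0:\vec H\vec x=\vec u\}=\min_{\vec z}\|\vec A\vec z-\vec t\|_0$, where the sign is absorbed by replacing $\vec z$ with $-\vec z$.

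\textbf{Completeness.} If some $\vec x$ with $\|\vec x\|_0\le k$ satisfies $\vec H\vec x=\vec u$, then some $\vec z$ gives $\|\vec A\vec z-\vec t\|_0\le k$.

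\textbf{Soundness.} If every solution of $\vec H\vec x=\vec u$ has weight greater than $\gamma k$, then every $\vec z$ gives $\|\vec A\vec z-\vec t\|_0>\gamma k$.

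The reduction runs in $\operatorname{poly}(N,d)$ time. It preserves both $k$ and $\gamma$, and the new instance has $N$ rows, the same $N$ as before.

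Now suppose $\gapkncpq$ could be solved in $N^{o(k)}$ time. Composing with the reduction above would solve $\gapkmldq$ in $N^{o(k)}+\operatorname{poly}(N)$ time. For $k$ larger than a suitable constant this is below $N^{\delta k}$, which contradicts \cref{thm:mld_main_restate}.

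The one subtle point is the regime of constant $k$, where the additive $\operatorname{poly}(N)$ overhead may exceed $N^{\delta k}$. The lower bound of \cref{thm:mld_main_restate} is meaningful there anyway, because its proof derives hardness from $\gapmaxlinq$ through the running time $(k\,2^{O(n/k)})^{\delta k}$. I would handle this by taking $k$ large but constant, chosen so that the polynomial overhead is dominated, and applying the theorem for that fixed $k$. This is valid because the theorem's proof works for each fixed $k$ once $m$ is large. The corollary's $N^{o(k)}$ statement then follows, since $o(k)$ refers to the growth in $k$.
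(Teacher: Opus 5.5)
Your proposal is correct and takes the same approach as the paper, which proves this corollary only by citing the standard $\gapkmldq$-to-$\gapkncpq$ reduction of \cite{LLL24} and composing it with \cref{thm:mld_main_restate}; your kernel-generator/coset-representative construction is that reduction, and it preserves $k$, $\gamma$, and the row count $N$. Fixing a sufficiently large constant $k$ to absorb the polynomial overhead is sound, and it is also what makes the theorem itself valid, since the theorem only contradicts $\ETH$ once $k$ is large enough that the $2^{O(m/k)}$ reduction time falls below the $\ETH$ bound; this is something the paper leaves implicit.
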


Using the reduction from $\gapncp_2$ to $\gapcvpp$ as in \cite{agm25}, we can obtain optimal hardness for $\gapkcvpp$ for all $p>1$. 
\begin{corollary}
    Assuming $\ETH$, for every $p>1$, there exists a constant $\gamma>1$, such that no algorithm can decide $\gapkcvpp$ instance $(\vec A,\vec t)$ where $\vec A\in \Z^{N \times M}, \vec t \in \Z^{N}$  in $N^{o(k)}$ time.
\end{corollary}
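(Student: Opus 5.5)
We derive this from the preceding corollary specialized to $q=2$, composed with the standard parity-lifting reduction from $\ncp$ over $\F_2$ to $\CVP$ used in \cite{agm25}. The key property is that this reduction preserves the parameter $k$ as the number of ``bad'' coordinates and only blows up the instance size polynomially, so the $N^{o(k)}$ lower bound carries over. The constant gap $\gamma$ from $\gapkncpq$ becomes the constant gap $\gamma^{1/p}>1$ in the $\ell_p$ norm.

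\textbf{Step 1 (the reduction).} Take a hard instance $(\vec A,\vec t)$ of $\gapkncpq$ with $q=2$, $\vec A\in\F_2^{N\times M}$, given by the preceding corollary. Lift $\vec A$ and $\vec t$ entrywise to $\{0,1\}$-valued integer matrices and vectors. Form the integer matrix $\vec B:=[\,\vec A\mid 2\vec I_N\,]\in\Z^{N\times(M+N)}$ and keep the target $\vec t\in\Z^N$. The lattice generated by the columns of $\vec B$ is exactly the set of integer vectors whose reduction mod $2$ lies in the code spanned by $\vec A$. The size is $N\times(N+M)$, which is polynomial in the input.

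\textbf{Step 2 (completeness and soundness).} In the $\yes$ case there is $\vec x\in\F_2^M$ such that $\vec A\vec x-\vec t$ has at most $k$ nonzero coordinates mod $2$. Lift $\vec x$ to $\{0,1\}^M$. On each coordinate where $\vec A\vec x\equiv\vec t \pmod 2$, choose the corresponding $z_i$ so that the entry of $\vec A\vec x+2\vec z-\vec t$ is $0$. On every other coordinate, choose $z_i$ so that the entry is $\pm1$. Then $\|\vec B(\vec x,\vec z)-\vec t\|_p^p\le k$.

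In the $\no$ case, take any integer vector $\vec w=\vec A\vec x+2\vec z-\vec t$. Its reduction mod $2$ equals $\vec A(\vec x\bmod 2)-\vec t$ over $\F_2$, so by soundness of $\ncp$ it has more than $\gamma k$ odd coordinates. Each odd coordinate has absolute value at least $1$, so $\|\vec w\|_p^p>\gamma k$.

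\textbf{Step 3 (matching the parameterization).} The one point that needs care is the convention in the definition of $\gapkcvpp$, which thresholds $\|\cdot\|_p\le k$ rather than $\|\cdot\|_p^p\le k$. With the threshold set at $k^{1/p}$, the gap is $\gamma^{1/p}>1$, a constant for each fixed $p$. I would either state the parameter in the $p$-th power convention, as is customary for parameterized $\CVP$ and as in \cite{agm25}, or rescale the target radius accordingly.

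In either convention the combinatorial parameter, namely the number of nonzero coordinates of the error, is still $k$, and the instance size is polynomial in $N$. An $N^{o(k)}$ algorithm for $\gapkcvpp$ would therefore give an $N^{o(k)}$ algorithm for $\gapkncpq$ over $\F_2$, contradicting the previous corollary under $\ETH$.

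The argument works for every $p\ge 1$, so in particular for all $p>1$. I expect no real obstacle beyond this bookkeeping of the norm convention.
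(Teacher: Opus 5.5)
Your proposal is correct and follows the same route as the paper. The paper only invokes the reduction from NCP over $\mathbb{F}_2$ to CVP in $\ell_p$ from \cite{agm25}, and your parity-lifting lattice generated by $[\vec A \mid 2\vec I_N]$ is exactly that standard reduction, with completeness, soundness and the $\gamma^{1/p}$ gap checked correctly. Your point about the norm convention (radius $k^{1/p}$ rather than $k$) is legitimate bookkeeping that the paper leaves implicit; either fix works, since a CVP algorithm running in $N^{o(k^{1/p})}$ time would in particular solve the $\mathbb{F}_2$ instance in $N^{o(k)}$ time.
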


%% file: construction.tex
\section{Construction of Cover Families}
\label{sec:constr}
In this section, we present the construction of cover families. Our approach relies on an intermediate combinatorial object which we call a \emph{balanced partition family}. A balanced partition family is a family of partitions over a universe of size $n$, such that every subset of size $\alpha n$ is partitioned into almost equal size by some partition. We first formalize balanced partition families as follows, and then establish a reduction showing that any efficient balanced partition family yields a corresponding cover family.

\begin{definition}[$(U,k,\alpha,\varepsilon,c)$-balanced partition family]
    A $(U,k,\alpha,\varepsilon,c)$-balanced partition family is a family $\Fcal\subseteq[k]^{U}$ of functions from $U$ to $[k]$, such that
    \begin{itemize}
        \item (P1) For every $f\in\Fcal$ and $j\in[k]$, $|f^{-1}(j)|\le \frac{c|U|}{k}$.
        \item (P2) For every $S\subseteq U$ with $|S|=\alpha |U|$, there exists $f\in\Fcal$ such that $|S\cap f^{-1}(j)|\le\frac{(1+\varepsilon)|S|}{k}$ for every $j\in[k]$.
    \end{itemize}
\end{definition}
In the following theorem we present the construction of a  \textit{cover family}, assuming the existence of a  \textit{balanced partition family}.
\begin{theorem}\label{thm:par2cov}
    On input a $(U,k,\alpha,\varepsilon,c)$-balanced partition family $\Fcal$, one can construct a $\big(U,k,\alpha,\varepsilon\big)$-cover family $\Scal$ with $|\Scal|=|\Fcal|\cdot k\cdot 2^{c|U|/k}$ in $|\Scal|^{O(1)}$ time.
\end{theorem}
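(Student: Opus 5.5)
The cover family will consist of all subsets of the parts of all partitions in $\Fcal$:
\[
    \Scal := \bigl\{\, T : T \subseteq f^{-1}(j) \text{ for some } f\in\Fcal,\ j\in[k],\ |T|\le \tfrac{(1+\varepsilon)\alpha|U|}{k} \,\bigr\}.
\]
We index each element by the triple $(f,j,T)$. By (P1), every part $f^{-1}(j)$ has at most $c|U|/k$ elements, so it has at most $2^{c|U|/k}$ subsets. Hence $|\Scal|\le|\Fcal|\cdot k\cdot 2^{c|U|/k}$, and padding with repeated or empty sets if an exact count is desired gives the stated size. To construct $\Scal$, we iterate over $f$ and $j$ and enumerate the subsets of $f^{-1}(j)$. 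Each step costs polynomial time in $|U|$, and $|U|\le|\Scal|$, so the total time is $|\Scal|^{O(1)}$. Property (C1) holds by definition, since we kept only subsets of size at most $(1+\varepsilon)\alpha|U|/k$.

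The substantive part is (C2). Take $\tilde S\subseteq U$ with $|\tilde S|\le\alpha|U|$. Property (P2) only applies to sets of size exactly $\alpha|U|$, so first extend $\tilde S$ to a superset $S\supseteq\tilde S$ with $|S|=\alpha|U|$. This needs $\alpha|U|$ to be an integer; otherwise we read (P2) with $\lfloor\alpha|U|\rfloor$, which is harmless. Property (P2) then gives $f\in\Fcal$ with $|S\cap f^{-1}(j)|\le(1+\varepsilon)\alpha|U|/k$ for all $j$. Set $T_j:=\tilde S\cap f^{-1}(j)$. Since $T_j\subseteq S\cap f^{-1}(j)$, its size satisfies the (C1) bound, and $T_j\subseteq f^{-1}(j)$, so $T_j\in\Scal$. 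The $T_j$ are pairwise disjoint because the parts of $f$ are, and their union is $\tilde S$ because $f$ is defined on all of $U$.

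The one subtlety is that (C2) asks for a $k$-element subfamily $\{T_1,\dots,T_k\}$ of pairwise disjoint sets. Several $T_j$ may be empty and then coincide as sets. This is resolved by viewing the elements of $\Scal$ as indexed by $(f,j,T)$, i.e.\ as a multiset. It is also harmless for the application in \cref{lem:cover_group}, where using fewer than $k$ sets only helps completeness. The main point to get right is therefore the padding of $\tilde S$ up to size exactly $\alpha|U|$ before applying (P2). Monotonicity of intersections then transfers the balance bound from $S$ to $\tilde S$.
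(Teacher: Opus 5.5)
Your proposal is correct and follows the paper's proof essentially verbatim: the same family of all subsets of the parts $f^{-1}(j)$ of size at most $(1+\varepsilon)\alpha|U|/k$, the size bound via (P1), and (C2) by padding $\tilde S$ to a superset of size exactly $\alpha|U|$, applying (P2), and taking $T_j=\tilde S\cap f^{-1}(j)$. Your extra remarks are sound refinements of points the paper passes over silently: the integrality of $\alpha|U|$, and empty $T_j$'s coinciding as sets, which you handle by indexing elements as $(f,j,T)$.
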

\begin{proof}
    Let $m := |U|$. We construct the cover family $\Scal$ as follows:
    \begin{itemize}
        \item For every function $f \in \Fcal$ and every index $j \in [k]$, let $\Scal_{f,j}$ be the collection of all subsets of the $j$-th bucket $f^{-1}(j)$ that have size at most $\frac{(1+\varepsilon)\alpha m}{k}$. That is,
        $$
            \Scal_{f,j} := \left\{ T \subseteq f^{-1}(j) : |T| \le \frac{(1+\varepsilon)\alpha m}{k} \right\}.
        $$
        \item Define $\Scal := \bigcup_{f \in \Fcal} \bigcup_{j \in [k]} \Scal_{f,j}$.
    \end{itemize}

    \subparagraph{Size and Running Time.}
    By property (P1) of the balanced partition family, for every $f \in \Fcal$ and $j \in [k]$, the bucket size is bounded by $|f^{-1}(j)| \le \frac{cm}{k}$. The number of subsets in $\Scal_{f,j}$ is bounded by the total number of subsets of the bucket, which is $2^{|f^{-1}(j)|}$. Therefore,
    $$
        |\Scal| \le \sum_{f \in \Fcal} \sum_{j \in [k]} 2^{|f^{-1}(j)|} \le |\Fcal| \cdot k \cdot 2^{cm/k}.
    $$
    The construction can be performed in time polynomial in $|\Scal|$.

    \subparagraph{Property (C1).}
    By construction, the maximum size of any set $S \in \Scal$ is bounded by $\frac{(1+\varepsilon)\alpha m}{k}$.

    \subparagraph{Property (C2).}
    Let $\tilde{S} \subseteq U$ be any set with $|\tilde{S}| \le \alpha m$. We need to show that $\tilde{S}$ can be exactly covered by $k$ disjoint sets from $\Scal$.
    Pick an arbitrary set $S'$ such that $|S'| = \alpha m$ and $\tilde{S}\subseteq S'$.
    By property (P2) of the balanced partition family, there exists a function $f \in \Fcal$ such that for all $j \in [k]$,
    $$
        |\tilde{S} \cap f^{-1}(j)|\le |S' \cap f^{-1}(j)| \le \frac{(1+\varepsilon)|S'|}{k} = \frac{(1+\varepsilon)\alpha m}{k}.
    $$
    Define $T_j := \tilde{S} \cap f^{-1}(j)$ for each $j \in [k]$.
    \begin{itemize}
        \item Since $T_j \subseteq f^{-1}(j)$ and $|T_j| \le \frac{(1+\varepsilon)\alpha m}{k}$, we have $T_j \in \Scal_{f,j} \subseteq \Scal$.
        \item Since $\{f^{-1}(j)\}_{j \in [k]}$ forms a partition of $U$, the sets $T_1, \dots, T_k$ are pairwise disjoint.
        \item $T_1,\dots, T_k$ can cover $\tilde{S}$, because $\bigcup_{j \in [k]} T_j = \tilde{S} \cap \bigcup_{j \in [k]} f^{-1}(j) = \tilde{S} \cap U = \tilde{S}$.
    \end{itemize}
    Thus, $\{T_1, \dots, T_k\} \subseteq \Scal$ is a exact cover for $\tilde{S}$ and satisfies (C2).
\end{proof}

\subsection{Randomized Construction}
We will prove the existence of a balanced partition family with $O(k)$ size, by showing that
randomly sampling $O(k)$ partitions and then rejecting all partitions that do not satisfy (C1) produces a balanced partition family with high probability.

\begin{theorem}\label{thm:rand_par}
    For every $m,k\in\N$ and $0<\alpha,\varepsilon<1$ such that $m\ge\frac{6k\ln 2k}{\varepsilon^2\alpha}$,
    there exists an $([m],k,\alpha,\varepsilon,1+\varepsilon)$-balanced partition family $\Fcal$ with $|\Fcal|=\frac{12k}{\varepsilon^2\alpha}$.
\end{theorem}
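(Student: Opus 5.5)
We will prove the statement with the probabilistic method: sample $t:=\frac{12k}{\varepsilon^2\alpha}$ independent uniformly random functions $f_1,\dots,f_t:[m]\to[k]$, and show that with positive probability (in fact $1-o(1)$) the family $\Fcal=\{f_1,\dots,f_t\}$ satisfies (P1) with $c=1+\varepsilon$ and (P2). The tool is the multiplicative Chernoff bound. If $X$ is a sum of independent indicators with mean $\mu$, then $\Pr[X\ge(1+\varepsilon)\mu]\le e^{-\varepsilon^2\mu/3}$ for $0<\varepsilon<1$.

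\textbf{Step 1 (one function, one set).} Call a function $f$ \emph{good for $S$} if it satisfies (P1), namely $|f^{-1}(j)|\le\frac{(1+\varepsilon)m}{k}$ for all $j$, and also $|S\cap f^{-1}(j)|\le\frac{(1+\varepsilon)|S|}{k}$ for all $j$. Fix $S$ with $|S|=\alpha m$ and let $f$ be uniformly random.
\begin{itemize}
\item For each bucket $j$, the count $|S\cap f^{-1}(j)|$ has mean $\alpha m/k$, so it exceeds its threshold with probability at most $e^{-\varepsilon^2\alpha m/(3k)}$.
\item Likewise $|f^{-1}(j)|$ has mean $m/k$, so it violates (P1) with probability at most $e^{-\varepsilon^2 m/(3k)}\le e^{-\varepsilon^2\alpha m/(3k)}$.
\end{itemize}
A union bound over the $2k$ events gives
$$\Pr[f\text{ not good for }S]\le 2k\,e^{-\varepsilon^2\alpha m/(3k)}.$$

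\textbf{Step 2 (all $t$ functions, all sets).} The $f_i$ are independent, so for a fixed $S$ no $f_i$ is good for $S$ with probability at most $(2k)^t e^{-t\varepsilon^2\alpha m/(3k)}=(2k)^t e^{-4m}$. There are $\binom{m}{\alpha m}\le 2^m$ sets $S$, so a union bound gives
$$\Pr[\exists S\text{ with no good }f_i]\le 2^m\,(2k)^t\,e^{-4m}.$$
This is the main point to check: the factor $(2k)^t$ must not swamp $e^{-4m}$. We have $t\ln(2k)=\frac{12k\ln 2k}{\varepsilon^2\alpha}\le 2m$ exactly by the hypothesis $m\ge\frac{6k\ln 2k}{\varepsilon^2\alpha}$. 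The bound is therefore at most $e^{m\ln 2+2m-4m}=e^{-(2-\ln 2)m}=o(1)$.

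\textbf{Step 3 (cleanup for (P1)).} Finally, discard every $f_i$ that violates (P1). A discarded function is never good for any $S$, so (P2) still holds with the remaining functions. To get exactly $\frac{12k}{\varepsilon^2\alpha}$ functions, either pad with copies of a remaining function, or note that all $f_i$ satisfy (P1) with high probability anyway, since by a union bound this fails with probability at most $t k e^{-\varepsilon^2 m/(3k)}=o(1)$. The result is an $([m],k,\alpha,\varepsilon,1+\varepsilon)$-balanced partition family of the stated size. This rejection step also yields the efficient randomized construction used in \cref{lem:partition_random}.

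The only delicate step is Step 2: a single random $f$ cannot balance all $2^{\Theta(m)}$ sets $S$ simultaneously. It works because taking $t=\Theta(k/(\varepsilon^2\alpha))$ independent trials drives the per-set failure probability down to $e^{-4m}$, which beats the entropy term $2^m$. The loss $(2k)^t$ from the union bound over buckets is absorbed by the lower bound on $m$.
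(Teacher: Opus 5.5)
Your proposal is correct and follows essentially the same argument as the paper. The paper also samples $t$ independent uniform functions and folds the (P1) violation and the failure to balance $S$ into a single per-function bad event of probability at most $2k\,e^{-\varepsilon^2\alpha m/(3k)}$; it then uses independence across the $t$ functions, a union bound over all $\binom{m}{\alpha m}$ sets, absorbs $(2k)^t$ via $m\ge\frac{6k\ln 2k}{\varepsilon^2\alpha}$, and discards the functions violating (P1). The remaining differences are cosmetic: you bound $\binom{m}{\alpha m}$ by $2^m$ rather than $e^m$, you explicitly address the exact family size (which the paper glosses over), and you should round $t$ up to an integer as the paper does.
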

\begin{proof}
    Let $t := \big\lceil \frac{12k}{\varepsilon^2\alpha} \big\rceil$. We sample $t$ functions $f_1, \dots, f_t: U \to [k]$ independently and uniformly at random.
    We define $I\subseteq[k]$ as the indices of functions that satisfy (P1), formally, let
    $$
        I := \left\{ i \in [t] : \forall j \in [k], \ |f_i^{-1}(j)| \le \frac{(1+\varepsilon)m}{k} \right\}.
    $$
    For any subset $S \subseteq U$, let $B(S)$ be the ``bad event'' that no function $f_i$ with $i\in I$ evenly splits $S$.
    That is, for every $i \in I$, there exists some bucket $j \in [k]$ such that $|S \cap f_i^{-1}(j)| > \frac{(1+\varepsilon)|S|}{k}$.
    We will show that with high probability, $B(S)$ does not occur for every subset $S \subseteq [m]$ of size $\alpha m$.

    Consider a fixed index $i \in [t]$ and a fixed subset ${S} \subseteq U$ with $|{S}| = \alpha m$. We define two bad events for $f_i$ and analyze the probability of these events:
    \begin{itemize}
        \item Let $E_1(i)$ be the event that $i \notin I$.
        Since $f_i$ is a uniform random function, $f(x)$ is drawn uniformly and independently from $[k]$ for each $x\in[m]$. Hence for a fixed bucket $j \in [k]$,  $x\in f^{-1}_i(j)$ with probability $\frac{1}{k}$ for every $x\in[m]$, and thus $|f_i^{-1}(j)|$ is a sum of independent Bernoulli trials with mean $m/k$. By the Chernoff bound,
        $$
            \Pr\left[ |f_i^{-1}(j)| > \frac{(1+\varepsilon)m}{k} \right] < \exp\left( -\frac{\varepsilon^2 m}{3k} \right).
        $$
        Taking a union bound over all $k$ buckets,
        $$
            \Pr[E_1(i)] < k \exp\left( -\frac{\varepsilon^2 m}{3k} \right).
        $$

        \item Let $E_2(i, {S})$ be the event that $f_i$ fails the  property (P2) for the specific set ${S}$. This means there exists some bucket $j \in [k]$ such that $|{S} \cap f_i^{-1}(j)| > \frac{(1+\varepsilon)\alpha m}{k}$. For a fixed $j \in [k]$, the random variable $|{S} \cap f_i^{-1}(j)|$ follows a binomial distribution $\Bin(\alpha m, 1/k)$. Similarly, by the Chernoff bound and a union bound over $k$ buckets, we have
        $$
            \Pr[E_2(i, {S})] < k \exp\left( -\frac{\varepsilon^2 \alpha m}{3k} \right).
        $$
    \end{itemize}

    Let $E(i, {S}) := E_1(i) \lor E_2(i, {S})$ be the combined bad event where $f_i$ either violates (P1) or fails to balance ${S}$. Since $\alpha < 1$, we have
    $$
        \Pr[E(i, {S})] \le \Pr[E_1(i)] + \Pr[E_2(i, {S})] < 2k \exp\left( -\frac{\varepsilon^2 \alpha m}{3k} \right).
    $$
    Since the functions are sampled independently, we have
    $$
        \Pr[B(S)] = \prod_{i=1}^t \Pr[E(i, {S})] < \bigg(2k\exp\Big({ -\frac{\varepsilon^2\alpha m}{3k} }\Big)\bigg)^t.
    $$
    Take a union bound over all possible subsets ${S}$ of size $\alpha m$. Since the total number of such subsets is $\binom{m}{\alpha m} <\e^m$, the probability that there exists a bad set ${S}$ is at most
    $$\begin{aligned}
        \sum_{S\in\binom{[m]}{\alpha m}} \Pr[B(S)]
        &< \e^m\bigg(2k\exp\Big({ -\frac{\varepsilon^2\alpha m}{3k} }\Big)\bigg)^t \\
        &= \exp\bigg({ m -t\Big(\frac{\varepsilon^2\alpha m}{3k} -\ln 2k\Big) }\bigg).
    \end{aligned}$$
    When $m\ge\frac{6k\ln 2k}{\varepsilon^2\alpha}$ and $t=\big\lceil \frac{12k}{\varepsilon^2\alpha} \big\rceil$, the total probability is at most $\e^{-m}$.
    
    We let the final family $\Fcal$ contain all functions that satisfy property (P1), i.e., let $\Fcal := \{f_i : i \in I\}$.
    By definition, every function in $\Fcal$ satisfies (P1).
    Furthermore, we have proved that with probability $1-\e^{-m}$, for every subset $S\in\binom{U}{\alpha m}$, there exists $f\in\Fcal$ that evenly splits $S$, hence $\Fcal$ also satisfies (P2).
    Thus, $\Fcal$ is a valid balanced partition family with high probability.
\end{proof}

Using \cref{thm:par2cov}, we obtain an $([m],k,\alpha,\varepsilon)$-cover family $\Scal$ with $|\Scal|=\frac{12k}{\varepsilon^2\alpha}\cdot k\cdot 2^{(1+\varepsilon)m/k}$ in $|\Scal|^{O(1)}$ time, hence proving \cref{lem:partition_random}.


\subsection{Derandomization}
This section introduces the hypercube partition system as a derandomization of the balanced partition family. We define the hypercube partition system as follows.
\begin{definition}[Hypercube Partition System]
    A $(k,d)$-hypercube partition system is a pair $(U,\Fcal)$ where $U:=[k]^{[d]}$ and $\Fcal:=\{f_1,\dots,f_d\}$ is a collection of funtions from $U$ to $[k]$. We view $U$ as the set of all points in a $d$-dimensional hypercube where each coordinate takes a value in $[k]$. The $i$-th partition $f_i:U\to[k]$ is defined as for every $x\in U$
    $$
        f_i(x)=x(i).
    $$
    In other words, for any $x\in U$, $f_i$ groups $x$ according to its $i$-th coordinate, into the $x(i)$-th set.
\end{definition}

\input{lemma4.6_new}

%% file: lemma4.6_new.tex

We show that a $(k,d)$-hypercube partition system $(U,\Fcal)$ is a balanced partition family for sufficiently large $d$. Informally, it suffices to show the following proposition:
\begin{itemize}
    \item[($\star$)] For any subset $S\subseteq U$, if on every coordinate $i\in[d]$, there exists some $j \in [k]$ such that at least $\frac{(1+\varepsilon)|S|}{k}$ elements $x \in S$ satisfy $x(i) = j$, then $|S|<\alpha |U|$. 
\end{itemize}
By the contrapositive of ($\star$), if $S$ is large enough, then at least one coordinate $i\in[d]$ must evenly split $S$. This indicates that $\Fcal$ is a balanced partition family.



To show $(\star)$, we consider a uniform distribution $p$ over the subset $S$.
Let $p_i$ denote the marginal distribution of $p$ on the $i$-th coordinate.
The condition in $(\star)$ guarantees that for each coordinate $i$, $p_i$ concentrates on some value $j$, which implies that $p_i$ has small entropy.
Then we add up the entropy of $p_i$ over all coordinates, which provides a upper bound on the entropy of $p$. Hence $|S|$ must be small.

We prove this formally in \cref{lem:partition_gadget_derandomize_lem}.

\begin{lemma}
\label{lem:partition_gadget_derandomize_lem}
    Let $d,k\in\N$, $0<\alpha,\varepsilon<1$, and $U:=[k]^{d}$.
    For every $i\in[d]$, let $f_i:U\to[k]$ be such that $f_i(x)=x_i$.
    If $d\ge\frac{4k}{\varepsilon^2\alpha}$, then for every $S\subseteq U$ with $|S|=\alpha k^d$, there exists $i\in[d]$, such that $|S\cap f_i^{-1}(j)|\le (1+\varepsilon)\alpha k^{d-1}$ for every $j\in[k]$.
\end{lemma}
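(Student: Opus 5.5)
We argue by contradiction via entropy, following the outline $(\star)$. Suppose $S\subseteq U$ with $|S|=\alpha k^d$ is such that for every coordinate $i\in[d]$ there is some $j_i\in[k]$ with $|S\cap f_i^{-1}(j_i)|>(1+\varepsilon)\alpha k^{d-1}=\frac{(1+\varepsilon)|S|}{k}$. Let $X=(X_1,\dots,X_d)$ be uniform on $S$, so $H(X)=\log|S|=d\log k+\log\alpha$. By subadditivity of Shannon entropy, $H(X)\le\sum_{i=1}^d H(X_i)$. It therefore suffices to show that each marginal has entropy noticeably below $\log k$, namely $H(X_i)\le\log k-c\varepsilon^2/k$ for an absolute constant $c$. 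Summing gives $d\log k+\log\alpha\le d\log k-dc\varepsilon^2/k$, i.e.\ $d\le \frac{k\log(1/\alpha)}{c\varepsilon^2}$. Since $\log(1/\alpha)<1/\alpha$ (natural log), this contradicts $d\ge\frac{4k}{\varepsilon^2\alpha}$ once $c\ge 1/4$.

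The key step is the per-coordinate entropy deficit. Write $p_i$ for the law of $X_i$ on $[k]$. By hypothesis $p_i(j_i)>(1+\varepsilon)/k$. Measured in nats, $\log k-H(X_i)=D(p_i\,\|\,u)$, the KL divergence from the uniform distribution $u$ on $[k]$. By the data-processing inequality applied to the indicator of $\{j_i\}$, we get $D(p_i\|u)\ge D(\mathrm{Bern}(a)\,\|\,\mathrm{Bern}(1/k))$ with $a=p_i(j_i)>(1+\varepsilon)/k$. Monotonicity of binary KL in $a$ above $1/k$ lets us take $a=(1+\varepsilon)/k$.

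We then lower-bound this binary divergence. Use the $\chi^2$-type bound $D(\mathrm{Bern}(a)\|\mathrm{Bern}(b))\ge\frac{(a-b)^2}{2\max(a,b)}$, which is valid for $a\ge b$ and is the standard Chernoff-type estimate. With $a=(1+\varepsilon)/k$ and $b=1/k$ this yields
\[
D\ \ge\ \frac{\varepsilon^2/k^2}{2(1+\varepsilon)/k}=\frac{\varepsilon^2}{2(1+\varepsilon)k}\ \ge\ \frac{\varepsilon^2}{4k}.
\]
So $c=1/4$ works. Note that Pinsker alone would give only $\varepsilon^2/k^2$, losing a factor of $k$, so the sharper local bound is essential.

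The main obstacle is precisely this sharper bound: getting the $\varepsilon^2/k$ deficit rather than $\varepsilon^2/k^2$. I would prove it directly by checking that $g(a)=D(\mathrm{Bern}(a)\|\mathrm{Bern}(b))-\frac{(a-b)^2}{2a}$ is nonnegative. It vanishes at $a=b$, and its derivative is nonnegative for $a\ge b$; this can be verified using $\ln(1+x)\ge \frac{2x}{2+x}$. With that bound in hand, the remainder is the bookkeeping above, using $\log(1/\alpha)\le 1/\alpha$, which concludes the lemma.
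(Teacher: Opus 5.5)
Your proposal is correct and is essentially the paper's proof: take $X$ uniform on $S$, apply subadditivity of entropy, show each coordinate loses at least $\varepsilon^2/(4k)$ nats, and conclude using $\ln(1/\alpha)<1/\alpha$. Your data-processing reduction to a binary divergence is exactly the paper's Jensen step $H(X_i)\le h(\rho_i)$, since $h(\rho)=\ln k - D(\mathrm{Bern}(\rho)\,\|\,\mathrm{Bern}(1/k))$. Likewise, your bound $(a-b)^2/(2a)$, which you verify by a derivative check, is what the paper obtains from Taylor's theorem with Lagrange remainder together with $\xi(1-\xi)<\rho_i$.
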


\begin{proof}
    Let $X$ be a random variable uniformly distributed over the subset $S$. We have $H(X)=\ln |S|=d\ln k+\ln\alpha$.
    Let $p_i$ be the distribution of $X_i$, i.e., for every $j\in[k]$,
    $$
        p_i(j) = \frac{|\{x\in S:x_i=j\}|}{|S|}=\frac{|S\cap f_i^{-1}(j)|}{\alpha k^d}.
    $$
    Now we will prove that there exists some $i\in[d]$, such that $p_i(j)\le \frac{1+\varepsilon}{k}$ for every $j\in[k]$, which immediately implies that $|S\cap f_i^{-1}(j)|\le(1+\varepsilon)\alpha k^{d-1}$.

    Assume for the sake of contradiction that for every $i\in[d]$, there exists some $j_i\in[k]$ such that $p_i(j_i)>\frac{1+\varepsilon}{k}$.
    Fix any $i\in[d]$ and let $\rho_i:=p_i(j_i)$.
    Then, by Jensen's inequality,
    $$\begin{aligned}
        H(X_i) &= p_i(j_i)\ln\frac{1}{p_i(j_i)}+\sum_{j\neq j_i}p_i(j)\ln\frac{1}{p_i(j)} \\
        &\le \rho_i\ln\frac{1}{\rho_i}+(1-\rho_i)\ln\frac{k-1}{1-\rho_i}.
    \end{aligned}$$
    Let $h(x):=x\ln\frac{1}{x}+(1-x)\ln\frac{k-1}{1-x}$.
    We have $h'\big(\frac{1}{k}\big)=0$ and $h''(x)=-\frac{1}{x(1-x)}$.
    By Taylor's theorem with the Lagrange's form of remainder, we expand $h(x)$ at $x=1/k$, then there exists $\xi$ with $\frac{1}{k}<\xi< \rho_i$ such that
    $$\begin{aligned}
        h(\rho_i) &= h\Big(\frac{1}{k}\Big)+h'\Big(\frac{1}{k}\Big)\Big(\rho_i-\frac{1}{k}\Big)+\frac{h''(\xi)}{2}\Big(\rho_i-\frac{1}{k}\Big)^2 \\
        &= \ln k - \frac{1}{2\xi(1-\xi)}\Big(\rho_i-\frac{1}{k}\Big)^2.
    \end{aligned}$$
    Recall that $\rho_i>\frac{1+\varepsilon}{k}$. Since $\xi(1-\xi)<\xi<\rho_i$, and by the fact that $f(x):=x+\frac{1}{x}-2$ is increasing when $x>1$, we have
    $$\begin{aligned}
        \ln k-h(\rho_i) > \frac{1}{2\rho_i}\Big(\rho_i-\frac{1}{k}\Big)^2
        = \frac{f(k\rho_i)}{2k} > \frac{f(1+\varepsilon)}{2k}
        > \frac{\varepsilon^2}{4k},
    \end{aligned}$$
    thus
    $$
        H(X_i)\le h(\rho_i)<\ln k-\frac{\varepsilon^2}{4k}.
    $$
    By the subadditivity of entropy, when $d\ge \frac{4k}{\varepsilon^2\alpha}$, we have
    $$
        H(X)\le \sum_{i\in[d]} H(X_i)\le d\ln k-\frac{\varepsilon^2d}{4k}
        \le d\ln k-\frac{1}{\alpha}
        <d\ln k+\ln\alpha,
    $$
    which contradicts $H(X)=\ln|S|=d\ln k+\ln\alpha$.
\end{proof}

\cref{lem:partition_gadget_derandomize_lem} implies that a $(k,d)$-hypercube partition system is a $([k]^{d},k,\alpha,\varepsilon,1)$-balanced partition family of size $d$ if $d$ is large enough.
A limitation of this construction is that we require the size of the universe to be an integer power of $k$.
To address this, we define a subset $U \subseteq [k]^{d}$ of size exactly $m$ by taking a ``diagonal slice'' of the hypercube. This slice preserves the balanced property of projections: fixing one coordinate to any value leaves the remaining coordinates to cycle through values uniformly, ensuring the slice is spread evenly across buckets.

\begin{theorem}\label{thm:partition_gadget_derandomize}
    For every $m,k\in\N$ and $0<\eta,\varepsilon<1$ such that $m\ge k^{4k^2/(\varepsilon^2\eta)}$, there exists an $([m],k,\eta,\varepsilon,2)$-balanced partition family $\Fcal$ of size $|\Fcal|=\lceil\log_k m\rceil$ that can be constructed in $m^{O(1)}$ time.
\end{theorem}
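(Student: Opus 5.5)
The plan is to realize $[m]$ as a ``diagonal slice'' $U\subseteq[k]^d$ with $d:=\lceil\log_k m\rceil$, take $\Fcal$ to be the $d$ coordinate projections $f_1,\dots,f_d$ restricted to $U$, and derive (P2) from \cref{lem:partition_gadget_derandomize_lem}. Property (P1) with $c=2$ will come from the structure of the slice. Throughout, $k^{d-1}<m\le k^d$, and the hypothesis $m\ge k^{4k^2/(\varepsilon^2\eta)}$ gives $d\ge\log_k m\ge\frac{4k^2}{\varepsilon^2\eta}$.

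\emph{Step 1: the embedding.} Write each $t\in\{0,\dots,m-1\}$ in base $k$ with digits $t_1$ (least significant) through $t_d$. Define $\phi(t)\in[k]^d$ (identifying $[k]$ with $\mathbb{Z}_k$) by
\[
\phi(t)_1:=t_1,\qquad \phi(t)_i:=(t_i+t_1)\bmod k \quad (i\ge 2).
\]
The map is injective: $t_1$ is read off from coordinate $1$, and then each $t_i=\phi(t)_i-t_1 \bmod k$. Set $U:=\phi(\{0,\dots,m-1\})$, so $|U|=m$, and $f_i(x)=x_i$.

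\emph{Step 2: property (P1).} Split $\{0,\dots,m-1\}$ into consecutive blocks of $k$ integers that agree on all digits except $t_1$; the last block may be partial. Within a full block, $t_1$ runs over all of $\mathbb{Z}_k$ while every $t_i$ with $i\ge2$ is fixed. Hence for each $i$, coordinate $\phi(t)_i$ takes every value exactly once in each full block, and at most once in the partial block. Therefore
\[
|f_i^{-1}(j)\cap U|\le\lceil m/k\rceil\le 2m/k .
\]
This is (P1) with $c=2$. The construction clearly runs in $m^{O(1)}$ time.

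\emph{Step 3: property (P2).} Let $S\subseteq U$ with $|S|=\eta m$, viewed as a subset of $[k]^d$. Apply \cref{lem:partition_gadget_derandomize_lem} with $\alpha:=|S|/k^d$. Since $m>k^{d-1}$, we have $\alpha=\eta m/k^d>\eta/k$, and so
\[
\frac{4k}{\varepsilon^2\alpha}<\frac{4k^2}{\varepsilon^2\eta}\le d,
\]
which is exactly the lemma's hypothesis. The lemma then yields some $i$ with $|S\cap f_i^{-1}(j)|\le(1+\varepsilon)\alpha k^{d-1}=(1+\varepsilon)|S|/k$ for all $j$, which is (P2). Finally, $|\Fcal|=d=\lceil\log_k m\rceil$.

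The main obstacle is Step 1: choosing a slice of size exactly $m$ whose coordinate fibers are all balanced. The naive choice, base-$k$ digits alone, fails on the top digit when $m$ is only slightly larger than $k^{d-1}$. The shift by $t_1$ fixes this. A secondary point is that \cref{lem:partition_gadget_derandomize_lem} must be used with a relative density $\alpha$ that is not fixed in advance. This is harmless because $\alpha>\eta/k$, and that loss of a factor $k$ is precisely what the $k^2$ in the exponent of the hypothesis absorbs.
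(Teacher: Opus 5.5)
Your proof is correct and follows the paper's route: realize $[m]$ as a slice of $[k]^d$ with $d=\lceil\log_k m\rceil$, take the coordinate projections, and get (P2) from \cref{lem:partition_gadget_derandomize_lem} with $\alpha=\eta m/k^d>\eta/k$. The only difference is the choice of slice. The paper takes an arbitrary $m$-subset of the union of the first $c=\lceil m/k^{d-1}\rceil$ level sets of $\sum_j x_j \bmod k$ and bounds each fiber by $ck^{d-2}\le 2m/k$, whereas your digit-shift embedding gives the slightly sharper fiber bound $\lceil m/k\rceil$.
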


\begin{proof}
    Let $d := \lceil \log_k m \rceil$ and $c := \big\lceil\frac{m}{k^{d-1}}\big\rceil$.
    We have $(c-1)k^{d-1}<m\le ck^{d-1}$ and $2\le c\le k$.
    Define a partition $U_0,\dots,U_{k-1}$ of $[k]^d$ based on the modular sum of coordinates:
    $$
        U_\ell:=\left\{ x \in [k]^d : \left( \sum_{j=1}^d x_j \right) \bmod k =\ell \right\}.
    $$
    Let $U':=U_0\cup\dots\cup U_{c-1}$.
    Then $|U'|=ck^{d-1}$ because $|U_\ell|=k^{d-1}$ for every $\ell$.
    We set the universe $U$ to be an arbitrary subset of $U'$ of size $m\le |U'|$.
    Define $\Fcal = \{f_1, \dots, f_d\}$ as the projections on each coordinate: $f_i(x) = x(i)$.
    
    \subparagraph{Property (P1).}
    For any $i\in[d],j\in[k]$ and $0\le\ell<k$, the number of $x\in U_\ell$ such that $x_i=j$ is exactly $k^{d-2}$.
    To prove this, we take an arbitrary $i'\neq i$. If we fix $x_t$ for every $t\in[d]\setminus\{i,i'\}$, then $x_{i'}$ must be $x_{i'}=\ell-\sum_{t\in[d]\setminus\{i'\}}x_t$. The number of ways to fix $x_t$ is $k^{d-2}$, hence there are $k^{d-2}$ many $x\in U_\ell$ with $x_i=j$, i.e.,
    $$
        |U_\ell\cap f_i^{-1}(j)|=k^{d-2}.
    $$
    Since $m>(c-1)k^{d-1}$, for any $i\in[d]$ and $j\in[k]$,
    $$
        |U \cap f_i^{-1}(j)| \le |U' \cap f_i^{-1}(j)| = c k^{d-2} \le 2(c-1)k^{d-2} < \frac{2m}{k}.
    $$
    Hence $\Fcal$ satisfies property (P1).
    
    \subparagraph{Property (P2).}
    Let $\alpha := \frac{\eta m}{k^d}$. Since $m > k^{d-1}$, we have $\eta < \alpha k$, hence
    $$
        d \ge \log_k m
        \ge \frac{4k^2}{\varepsilon^2\eta}
        > \frac{4k}{\varepsilon^2\alpha}.
    $$
    Applying \cref{lem:partition_gadget_derandomize_lem},
    for any $S \subseteq [k]^{d}$ of size $\alpha k^d=\eta m$, there exists $f_i\in\Fcal$ such that every bucket $S\cap f_i^{-1}(j)$ has size at most
    $$
        (1 + \varepsilon) \alpha k^{d-1} = \frac{(1+\varepsilon)\eta m}{k} = \frac{(1+\varepsilon)|S|}{k}.
    $$
    Hence $\Fcal$ also satisfies property (P2).
\end{proof}

Using \cref{thm:par2cov}, we obtain an $([m],k,\eta,\varepsilon)$-cover family $\Scal$ with $|\Scal|=\lceil\log_k m\rceil\cdot k\cdot 2^{2m/k}$ in $|\Scal|^{O(1)}$ time, hence proving \cref{lem:partition_exist}.